\begin{document}

\newtheorem{thm}{Theorem}
\newtheorem{cor}{Corollary}
\newtheorem{lem}{Lemma}
\newtheorem{prop}{Proposition}

\def\Ref#1{Ref.~\cite{#1}}

\def\const{\text{const.}}
\def\Rnum{{\mathbb R}}
\def\sgn{{\rm sgn\;}}

\def\Esp{{\mathcal E}}
\def\k{\hat{\mathbf k}}

\def\dens{T}
\def\flux{\Phi}
\def\Div{{\rm Div}\,}
\def\triv{\Psi}
\def\trivflux{\Theta}
\def\chargeflux{\Gamma}

\def\V{V}
\def\A{A}

\def\gaugevar{\underline{\boldsymbol\chi}}

\tolerance=10000
\allowdisplaybreaks[4]

\title{Topological charges and conservation laws involving an arbitrary function of time for dynamical PDEs}

\author{
Stephen C. Anco$^1$
\lowercase{\scshape{and}}
Elena Recio$^2$ 
\\\\
${}^1$D\lowercase{\scshape{epartment}} \lowercase{\scshape{of}} M\lowercase{\scshape{athematics and}} S\lowercase{\scshape{tatistics}}\\
B\lowercase{\scshape{rock}} U\lowercase{\scshape{niversity}}\\
S\lowercase{\scshape{t.}} C\lowercase{\scshape{atharines}}, ON L2S3A1, C\lowercase{\scshape{anada}} \\\\
${}^2$D\lowercase{\scshape{epartment}} \lowercase{\scshape{of}} M\lowercase{\scshape{athematics}}\\
U\lowercase{\scshape{niversity of}} C\lowercase{\scshape{\'adiz}}\\
P\lowercase{\scshape{uerto}} R\lowercase{\scshape{eal}}, C\lowercase{\scshape{\'adiz}}, S\lowercase{\scshape{pain}}, 11510\\
}


\begin{abstract}
Dynamical PDEs that have a spatial divergence form possess 
conservation laws that involve an arbitrary function of time.
In one spatial dimension,
such conservation laws are shown to describe
the presence of an $x$-independent source/sink; 
in two and more spatial dimensions,
they are shown to produce a topological charge.
Two applications are demonstrated.
First,
a topological charge gives rise to an associated spatial potential system,
allowing nonlocal conservation laws and symmetries to be found
for a given dynamical PDE.
This type of potential system has a different form and different gauge freedom 
compared to potential systems that arise from ordinary conservation laws. 
Second,
when a topological charge arises from a conservation law whose conserved density is 
non-trivial off of solutions to the dynamical PDE, 
then this relation yields a constraint on initial/boundary data
for which the dynamical PDE will be well-posed.
Several examples of nonlinear PDEs
from applied mathematics and integrable system theory
are used to illustrate these results.
\end{abstract}

\maketitle

\section{Introduction}\label{sec:intro}

Many dynamical PDEs have the form of a spatial divergence. 
In one spatial dimension, the simplest such form consists of 
\begin{equation}\label{1D.eqn}
u_{tx} = D_x F(t,x,u,u_x,u_{xx},\ldots) 
\end{equation}
with $D$ denoting a total derivative.
A prominent physical example is
the Lagrangian form of the Korteweg-de Vries (KdV) equation
for uni-directional shallow water waves,
$u_{tx} +\alpha u_xu_{xx} +\beta u_{xxxx}=0$,
where $u_x=v$ is the wave amplitude. 
(Throughout, $\alpha,\beta$, etc.\ will denote constants.)

In two and three spatial dimensions,
the analogous form (up to a point transformation) is given by 
\begin{equation}\label{multiD.eqn}
\begin{aligned}
u_{tx} = &
D_x F^x(t,x,y,z,u,\nabla u,\nabla^2 u,\ldots) + D_y F^y(t,x,y,z,u,\nabla u,\nabla^2 u,\ldots)
\\&\quad
+ D_z F^z(t,x,y,z,u,\nabla u,\nabla^2 u,\ldots)
\end{aligned}
\end{equation}  
where $\nabla=(\partial_x,\partial_y,\partial_z)$ is the spatial gradient operator.
One important example is the Kadomtsev--Petviashvili (KP) equation \cite{KadPet}
$u_{tx} + (\alpha uu_{x} +\beta u_{xxx})_x \pm u_{yy}=0$,
where $u$ is the wave amplitude. 
In the ``$+$'' case,
it describes shallow water waves with small surface tension, 
and in the ``$-$'' case,
waves in thin films with large surface tension.
Two related examples are 
$(u_{t} + \alpha uu_{x})_x + u_{yy}+u_{zz}=0$
which describes weakly nonlinear, weakly diffracting acoustic waves \cite{ZabKho},  
and $(u_{t} + \beta u^2 u_{x})_x + u_{yy}+u_{zz}=0$
which describes nonlinear, linearly-polarized shear waves \cite{Zab}. 
Another physical example is the Zakharov--Kuznetsov (ZK) equation \cite{ZakKuz}
in Lagrangian form
$u_{tx} +\alpha u_xu_{xx} +\beta u_{xxxx} + \gamma(u_{xxyy} + u_{xxzz})=0$,
which describes ion-acoustic waves in a magnetized plasma,
where $u_x=v$ is the wave amplitude.

In addition,
it is possible to consider PDEs with higher spatial derivatives of $u_t$. 
A notable physical example is the vorticity equation in incompressible fluid flow in two dimensions \cite{MajBer}
$\Delta u_t +u_x \Delta u_y -u_y \Delta u_x =\mu \Delta^2 u$,
where $(-u_y,u_x)$ are the components of the fluid velocity
and $\Delta u$ is the vorticity scalar. 
An example from the theory of integral systems is
the Novikov--Veselov (NV) equation \cite{VesNov},
which has the potential form 
$u_{txy} +\alpha (u_{xy}u_{xx})_x +\beta (u_{xy} u_{yy})_y + u_{xxxxy} + u_{xyyyy} =0$.
This equation arises from isospectral flows
for the two-dimensional Schr\"odinger operator at zero energy \cite{NovVes}. 

All dynamical PDEs \eqref{1D.eqn} and \eqref{multiD.eqn} 
possess at least one family of conservation laws 
involving an arbitrary function $f(t)$ of time $t$:
\begin{equation}\label{multiD.conslaw.f}
D_x( f(t)(u_t - F^x) ) + D_y( {-}f(t)F^y ) + D_z( {-}f(t)F^z ) =0
\end{equation}
holding on solutions $u(t,x,y,z)$. 
This conservation law family arises from $f(t)$ being a multiplier,
whose product with the PDE yields a total divergence. 
A full discussion of multipliers can be found in \Ref{Olv-book,BCA-book,Anc-review}.

When a dynamical PDE can be expressed in a form given by higher spatial derivatives,
then it can admit additional conservation law families involving $f(t)$.
For instance, consider
\begin{equation}\label{multiD.eqn.2ndorder}
\begin{aligned}  
u_{tx} = &
D_x F^x(t,x,y,z,u,\nabla u,\nabla^2 u,\ldots)_x + D_y^2 F^y(t,x,y,z,u,\nabla u,\nabla^2 u,\ldots)
\\&\quad
+ D_z F^z(t,x,y,z,u,\nabla u,\nabla^2 u,\ldots)
\end{aligned}  
\end{equation}
whose form has a second-order derivative with respect to $y$.
Every such dynamical PDE possesses an additional conservation law family 
\begin{equation}\label{multiD.conslaw.yf}
D_x( y f(t)(u_t - F^x) ) + D_y( f(t)F^y -yf(t)(F^y)_y ) +D_z( {-}yf(t)F^z ) =0
\end{equation}
which arises from $yf(t)$ being a multiplier. 
Both the KP equation and the ZK equation are examples.

A converse statement can be made, 
by seeking the most general form for a dynamical PDE that possesses 
a family of conservation laws involving an arbitrary function of $t$. 
As shown recently in \Ref{PopBih}, 
any PDE that admits a multiplier $f(t)$ necessarily has the form of a spatial divergence, 
and any PDE that admits a multiplier $f_0(t)+f_1(t)y$ will be given by 
a spatial divergence form having a second-order derivative with respect to $y$. 
An analogous general form characterizes dynamical PDEs for which 
$f_0(t)+f_1(t)x+f_2(t)y + f_3(t)z$ is a multiplier. 

In general,
conservation laws of dynamical PDEs are central to the analysis of solutions 
by providing physical, conserved quantities as well as conserved norms needed
for studying well-posedness, stability, and global behaviour. 
Conservation laws are also important in checking the accuracy of numerical schemes 
and in devising numerical schemes that have good properties \cite{BihDosPop,WanBihNav}.
For a given dynamical PDE, 
all of its conservation laws (up to any specified differential order)
can be found systematically by the multiplier method \cite{Olv-book,BCA-book,Anc-review}. 
Multipliers that are linear in an arbitrary function $f(t)$ and possibly a finite number of its derivatives
will determine conservation laws in which $f(t)$ and its derivatives appear linearly. 

The wide variety of physically important dynamical PDEs that possess 
conservation laws involving an arbitrary function of $t$ 
motivates asking some fundamental questions about such conservation laws: 
What is their mathematical structure? What physical meaning do they have? 
What information do they provide about solutions of the given dynamical PDE?

The present paper is addressed to these questions and finds an interesting answer:
\begin{itemize}
\item
a non-trivial conservation law involving an arbitrary function of $t$
in one spatial dimension
describes the presence of an $x$-independent \emph{source/sink}; 
\\
\item
in two and more spatial dimensions,
a non-trivial conservation law involving an arbitrary function of $t$
describes a \emph{topological charge}.
\end{itemize}

These main results have some interesting applications for
the study of dynamical PDEs. 

One application is that any non-trivial conservation law involving an arbitrary function $f(t)$
in two or more spatial dimensions 
can be used to introduce a spatial potential system,
allowing nonlocal conservation laws and symmetries to be found
for a given dynamical PDE.
This type of potential system has a different form and different gauge freedom 
compared to potential systems that arise from ordinary conservation laws. 
In general, any potential system can also be useful for doing analysis,
because it provides an equivalent formulation which may have better properties
for the study of solutions of a PDE. 

A more analytical, deeper application emerges when the conserved integral 
on a spatial domain is considered for 
a non-trivial conservation law involving an arbitrary function $f(t)$ 
in two or more spatial dimensions. 
For all solutions of the given dynamical PDE, 
the conserved integral can be shown to be equal to a lower-dimensional integral over the domain boundary,  
which thereby can be evaluated entirely in terms of the boundary values of $u$ and its derivatives. 
This yields an integral constraint relation 
whenever the conserved integral is non-trivial off of the solution space of the dynamical PDE. 
Specifically, if the initial/boundary data is chosen such that the boundary integral vanishes, 
then the conserved integral itself has to vanish. 
This kind of constraint relation has important implications
for the well-posedness of the Cauchy problem, 
especially when the conserved integral is given by a Sobolev norm or an energy.

Both of these applications have not been studied in any generality previously, 
and thus they advance the study of dynamical PDEs as well as the study of conservation laws. 

The rest of the paper is organized as follows.

In section~\ref{sec:results},
we state and prove the main results in a general form for dynamical PDEs in $n\geq 1$ spatial dimensions. 
We also give an explicit explanation of these results
for dynamical PDEs with a spatial-divergence form
in one, two, and three spatial dimensions.

In section~\ref{sec:applications},
we discuss the applications of these results
to constructing spatial potential systems
and to uncovering boundary-value integral relations. 

In section~\ref{sec:examples},
we illustrate all of the preceding developments
by considering some nonlinear PDEs
from applied mathematics and integrable system theory:
the KdV equation in Lagrangian form,
the KP equation,
a universal modified KP equation,
and equations of shear waves.
These examples cover one, two, and three spatial dimensions.
We also consider two examples of nonlinear PDEs with higher spatial derivatives of $u_t$: 
the Novikov--Veselov equation in potential form,
and the vorticity equation. 
For all of these examples, we derive conserved topological charges
from conservation laws that involve an arbitrary function $f(t)$,
and we write down the associated spatial potential systems. 
These conservation laws are found by the standard multiplier method (see e.g.~\Ref{Anc-review}).
In the case of conservation laws whose conserved density is non-trivial off of solutions, 
we derive integral constraint relations on initial data 
for the Cauchy problem.

In section~\ref{sec:conclude},
we make some concluding remarks.

\section{Main results}\label{sec:results}

We begin by considering a general dynamical PDE of order $m\geq 1$
\begin{equation}\label{dyn.pde}
G(t,x,u,\partial u,\ldots,\partial^m u)=0
\end{equation}
for $u(t,x)$ in $n\geq1$ spatial dimensions,
where $t$ is time, and $x=(x^1,\ldots,x^n)$ are spatial coordinates in $\Rnum^n$. 
Here $\partial = (\partial_t,\partial_{x^1},\ldots,\partial_{x^n})$. 
The space of all formal solutions to the PDE will be denoted $\Esp$. 

A \emph{local conservation law} for a PDE \eqref{dyn.pde} 
is a continuity equation
\begin{equation}\label{conslaw}
(D_t \dens + \Div \mathbf\flux)|_\Esp =0
\end{equation}
holding for all solutions $u(t,x)$ of the PDE, 
where $\dens$ is the conserved density,
and $\mathbf\flux=(\flux^1,\ldots,\flux^n)$ is the spatial flux,
which are functions of $t$, $x$, $u$, and derivatives of $u$ up to a finite order. 
The pair $(\dens,\mathbf\flux)$ is called a \emph{conserved current}. 

When solutions $u(t,x)$ are considered 
in a given spatial domain $\Omega\subseteq\Rnum^n$, 
every local conservation law yields a corresponding conserved integral 
\begin{equation}\label{conserved.integral}
\mathcal{C}[u]= \int_{\Omega} \dens|_\Esp d\V
\end{equation}
satisfying the global balance equation
\begin{equation}\label{global.conslaw}
\frac{d}{dt}\mathcal{C}[u]
= -\oint_{\partial\Omega} \mathbf\flux|_\Esp\cdot d\mathbf{\A}
\end{equation}
with $d\mathbf{\A}=\hat{\mathbf n}dA$,
where $\hat{\mathbf n}$ is the unit outward normal vector of the domain boundary $\partial\Omega$, 
and where $dA$ is the boundary volume element. 
This global equation \eqref{global.conslaw} has the physical meaning that
the rate of change of the quantity \eqref{conserved.integral} on the spatial domain 
is balanced by the net outward flux through the boundary of the domain. 

A conservation law is \emph{locally trivial}
\cite{Olv-book,BCA-book,Anc-review} 
when, for all solutions $u(t,x)$ in $\Omega$,
the global balance equation \eqref{global.conslaw} becomes an identity.
This happens iff the conserved density reduces to a spatial divergence,
$\dens|_\Esp =\Div\mathbf\triv|_\Esp$, 
and the spatial flux reduces to a time derivative,
$\mathbf\flux|_\Esp =-D_t\mathbf\triv|_\Esp$ 
modulo a spatial curl, $\Div\underline{\mathbf\trivflux}|_\Esp$
where $\underline{\mathbf\trivflux}$ is a skew-tensor
(namely, $\Div(\Div\underline{\mathbf\trivflux})=0$ holds identically off of $\Esp$).

Likewise, two conservation laws are \emph{locally equivalent}
\cite{Olv-book,BCA-book,Anc-review} 
if they differ by a locally trivial conservation law,
for all solutions $u(t,u)$ in $\Omega$.

A conservation law with $\dens|_\Esp=0$ is called
a \emph{spatial-flux} conservation law,
\begin{equation}\label{fluxconslaw}
\Div\mathbf\flux|_\Esp =0 . 
\end{equation}
Its corresponding conserved integral \eqref{conserved.integral} vanishes.
The meaning of this conserved integral depends on the number of spatial dimensions $n$,
and the topology of the domain $\Omega\subseteq\Rnum^n$. 

In the one-dimensional case ($n=1$), 
when $\Omega$ is a connected interval,
the boundary $\partial\Omega$ consists of two endpoints. 
The global conservation law \eqref{global.conslaw} then reduces to 
boundary source/sink terms 
\begin{equation}\label{1D.sourcesink}
(\flux|_{\partial\Omega})|_\Esp =0 . 
\end{equation}
Its content is that the source/sink flux at one endpoint is balanced by 
the source/sink flux at the other endpoint.
This balance is non-trivial iff $\flux\neq 0$. 

In the multi-dimensional case ($n\geq2$),
there are two different general situations.

If $\Omega$ is a connected volume that is topologically a solid ball, 
then the boundary $\partial\Omega$ consists of
a closed hypersurface $S$ that is topologically a hypersphere.
Hence the conserved integral \eqref{multiD.charge}
shows that the net spatial flux through this hypersurface vanishes, 
\begin{equation}\label{multiD.charge}
\oint_{S} \mathbf\flux|_\Esp\cdot d\mathbf{\A} =0 . 
\end{equation}
This integral describes a vanishing \emph{topological charge},
which is conserved (namely, time-independent).
It is a topological quantity in the sense that it is unchanged 
if the closed hypersurface $S$ is continuously deformed. 

The same interpretation holds if $\Omega$ has a more general topology 
such as a solid torus or its higher genus counterparts. 
This topological charge \eqref{multiD.charge} is analogous to electric/magnetic charge 
for electromagnetic fields in free space, 
where the net electric/magnetic flux through a closed surface vanishes due to the absence of electric/magnetic charges inside the volume bounded by the surface. 

Alternatively, if $\Omega$ is a connected volume that is topologically a shell,
then the boundary $\partial\Omega$ comprises two closed hypersurfaces, $S_1$ and $S_2$, 
each of which is topologically a hypersphere.
The global conservation law \eqref{multiD.charge} thereby
shows that the net spatial flux through two hypersurfaces are equal, 
\begin{equation}\label{multiD.fluxes}
\oint_{S_1} \mathbf\flux|_\Esp\cdot d\mathbf{\A}
=\oint_{S_2} \mathbf\flux|_\Esp\cdot d\mathbf{\A} .
\end{equation}
This equality describes the absence of any source/sink of net flux
inside $\Omega$.
It is analogous to conservation of net electric/magnetic flux 
for electromagnetic fields in free space. 

The same interpretation holds if $\Omega$ is given by
taking some connected volume with a more general topology
and removing from its interior some other connected volume. 

These hypersurface integrals \eqref{multiD.charge} and \eqref{multiD.fluxes}
will be identically zero by Stokes' theorem iff
$\mathbf\flux|_\Esp = \Div\underline{\mathbf\trivflux}|_\Esp$
holds for some skew-tensor $\underline{\mathbf\trivflux}$, 
for all solutions $u(t,x)$ of the PDE. 
Consequently, non-triviality is characterized by the following condition.

\begin{prop}\label{prop:nontriv.flux.multiD}
The topological charge integral \eqref{multiD.charge}
and the source/sink flux integrals \eqref{multiD.fluxes}
are non-trivial iff
$\mathbf\flux|_\Esp \neq \Div\underline{\mathbf\trivflux}|_\Esp$
holds for all skew-tensors $\underline{\mathbf\trivflux}$
which are functions of $t$, $x$, $u$, and derivatives of $u$ up to a finite order. 
\end{prop}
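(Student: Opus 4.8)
The plan is to argue both directions of the equivalence by tracing carefully what "triviality" of the hypersurface integrals means and connecting it, via Stokes' theorem, to the local structure of the flux. The statement just before Proposition~\ref{prop:nontriv.flux.multiD} already records that the integrals \eqref{multiD.charge} and \eqref{multiD.fluxes} vanish identically by Stokes' theorem precisely when $\mathbf\flux|_\Esp=\Div\underline{\mathbf\trivflux}|_\Esp$ for some skew-tensor $\underline{\mathbf\trivflux}$ depending on $t,x,u$ and finitely many derivatives; so the real content of the proposition is that \emph{no} such $\underline{\mathbf\trivflux}$ exists iff the integrals are genuinely non-trivial, i.e.\ one cannot sneak the triviality in by enlarging the differential order or exploiting the solution space $\Esp$. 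The proof therefore amounts to establishing the contrapositive carefully: if for \emph{every} admissible skew-tensor one has $\mathbf\flux|_\Esp\neq\Div\underline{\mathbf\trivflux}|_\Esp$, then there is at least one solution $u$ and one closed hypersurface $S$ (or pair $S_1,S_2$) on which the integral is nonzero, and conversely.

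First I would handle the easy direction ($\Leftarrow$ of non-triviality $\Rightarrow$ failure of the representation, equivalently: if the representation holds then the integral is trivial). Assuming $\mathbf\flux|_\Esp=\Div\underline{\mathbf\trivflux}|_\Esp$, one substitutes into \eqref{multiD.charge} and applies Stokes' theorem on the closed hypersurface $S$ bounding no region of its own but viewed as $\partial\Omega$; since $\underline{\mathbf\trivflux}$ is skew, $\oint_{S}(\Div\underline{\mathbf\trivflux})\cdot d\mathbf A$ telescopes to an integral of a total tangential divergence over the closed $S$, which vanishes. The same computation applied to $S_1$ and $S_2$ separately gives equal (indeed both zero) fluxes in \eqref{multiD.fluxes}. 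Thus the representation forces triviality for all solutions and all admissible domains.

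For the harder direction I would argue contrapositively: suppose the topological charge (or source/sink) integral is trivial, meaning $\oint_S\mathbf\flux|_\Esp\cdot d\mathbf A=0$ for \emph{all} solutions $u\in\Esp$ and all admissible closed hypersurfaces $S$. The goal is to produce a skew-tensor potential. The key step is to restrict the flux to $\Esp$ and treat $\mathbf\flux|_\Esp$ as a smooth vector field on each solution's spatial slice that has vanishing flux through every closed hypersurface; by the de~Rham/Poincaré lemma applied to the corresponding top-degree-minus-one form, such a field is locally an exact form, i.e.\ a spatial curl, so $\mathbf\flux|_\Esp=\Div\underline{\mathbf\trivflux}|_\Esp$ holds pointwise on $\Esp$. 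The real obstacle — and the step I expect to require the most care — is showing the potential $\underline{\mathbf\trivflux}$ can be chosen to be a \emph{local function} of $t,x,u$ and finitely many derivatives of $u$, uniformly in the solution, rather than just an $x$-dependent antiderivative constructed solution-by-solution. This is where one must invoke the algebraic structure of the variational complex / the characterization (cited to \Ref{PopBih} and the general multiplier theory in \Ref{Olv-book,BCA-book,Anc-review}) guaranteeing that a divergence expression vanishing on $\Esp$ admits a potential built from the same jet variables; concretely, one uses a homotopy-operator construction on the jet bundle, restricted to $\Esp$, to exhibit $\underline{\mathbf\trivflux}$ with the required functional dependence. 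Once that local skew-tensor is in hand, the equivalence is complete, and I would close by remarking that the two cases (ball-type $\Omega$ giving \eqref{multiD.charge}, shell-type $\Omega$ giving \eqref{multiD.fluxes}) are handled identically since both reduce to the single assertion that $\mathbf\flux|_\Esp$ is or is not a spatial curl.
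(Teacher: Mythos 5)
Your easy direction is fine and matches the paper's implicit justification: when $\mathbf\flux|_\Esp=\Div\underline{\mathbf\trivflux}|_\Esp$, the normal component of a curl restricted to a closed hypersurface $S$ is an exact $(n-1)$-form on $S$, so its integral vanishes by Stokes' theorem regardless of the PDE and regardless of whether $S$ bounds. The paper offers essentially nothing more than this plus the standard characterization of locally trivial fluxes, so the proposition is really a definition-unpacking rather than a theorem with a substantive converse to prove.

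The hard direction as you have set it up, however, rests on a mischaracterization of ``trivial'' that makes the statement you try to prove false. You take triviality to mean that $\oint_S\mathbf\flux|_\Esp\cdot d\mathbf A=0$ for all solutions and all closed hypersurfaces $S$. But for any spatial-flux conservation law one already has $\Div\mathbf\flux|_\Esp=0$, so the integral vanishes automatically for every $S$ that bounds a region inside the solution's domain --- that vanishing is precisely the content of \eqref{multiD.charge} and carries no information about whether the flux is a curl. If your converse were correct, every spatial-flux conservation law would be trivial, contradicting the paper's central examples (e.g.\ $\mathbf\chargeflux=u_t\k-\mathbf F$ in \eqref{nD.flux.conslaw}, which is shown \emph{not} to be a local curl even though its flux integral over every bounding hypersurface is zero). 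The relevant notion of triviality is that the integral vanishes \emph{identically by Stokes' theorem}, i.e.\ as an identity of exact forms on $S$ itself, independent of the PDE and of whether $S$ bounds --- the distinction only bites for non-bounding hypersurfaces, which is exactly why the quantity is called topological. Relatedly, your de~Rham/Poincar\'e step cannot rescue the argument: applied slice-by-slice it produces only a solution-dependent, non-local antiderivative, and the jet-space homotopy operator you invoke applies to expressions that are divergence-free identically off $\Esp$, not merely on solutions; passing from ``divergence-free on $\Esp$'' to ``local curl on $\Esp$'' is the characterization of trivial conservation laws for normal PDEs (Hadamard's lemma plus exactness of the horizontal complex), which is the standard fact the paper is implicitly citing rather than something recoverable from the vanishing of the integrals.
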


We next state and prove the main results about conservation laws that involve an arbitrary function of time. 

\begin{thm}\label{thm:main}
If a dynamical PDE \eqref{dyn.pde} possesses a local conservation law \eqref{conslaw}
given by a family of conserved currents 
\begin{align}
\dens & =\sum_{i=0}^{N} \dens_i(t,x,u,\partial u,\ldots,\partial^l u) \partial_t^i f(t)
\label{dens.arbfunct}
\\
\mathbf\flux & = \sum_{i=0}^{N+1} {\mathbf\flux}_i(t,x,u,\partial u,\ldots,\partial^l u) \partial_t^i f(t)
\label{flux.arbfunct}
\end{align}
involving an arbitrary function $f(t)$,
where each $\dens_i$ and $\mathbf\flux_i$ does not contain $f(t)$ and its derivatives, 
then the conservation law is locally equivalent to a spatial-flux conservation law 
\begin{equation}\label{chargeflux.conslaw}
f(t)\Div\mathbf\chargeflux|_\Esp =0,
\quad
\mathbf\chargeflux = \sum_{j=0}^{N+1} (-D_t)^{j}\mathbf\flux_{j}(t,x,u,\partial u,\ldots,\partial^l u) .
\end{equation} 
\end{thm}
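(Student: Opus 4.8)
The plan is to substitute the families \eqref{dens.arbfunct}--\eqref{flux.arbfunct} into the continuity equation \eqref{conslaw}, use integration by parts in $t$ to move every derivative of $f$ off of the flux family, and then exploit the arbitrariness of $f(t)$ a second time on the resulting reduced equation.

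First I would expand $D_t\dens + \Div\mathbf\flux$ by means of \eqref{dens.arbfunct}--\eqref{flux.arbfunct} and group the result as $\sum_{i=0}^{N+1}(\,\cdot\,)\,\partial_t^i f$, where every coefficient is a function of $t,x,u$ and derivatives of $u$ alone. Since $f$ ranges over all functions of $t$, the coefficient of each $\partial_t^i f$ must vanish separately on $\Esp$; this is the standard multiplier argument, and it yields the recursion $(D_t\dens_i+\dens_{i-1}+\Div\mathbf\flux_i)|_\Esp=0$ (with the conventions $\dens_{-1}=\dens_{N+1}=0$). Throughout I will use only that $\Esp$ is closed under total derivatives, so that $H|_\Esp=0$ implies $(D_tH)|_\Esp=0$.

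Next I would invoke the elementary identity $g\,\partial_t^i f = \big((-D_t)^i g\big)f + D_t\!\big(\sum_{k=0}^{i-1}(-1)^k (D_t^k g)\,\partial_t^{i-1-k}f\big)$, which holds formally for any $g$ and applies componentwise with $g=\mathbf\flux_i$. Summing over $i$ rewrites the flux family exactly as $\mathbf\flux = f(t)\,\mathbf\chargeflux + D_t\mathbf\triv$, where $\mathbf\chargeflux$ is the expression in \eqref{chargeflux.conslaw} and $\mathbf\triv = \sum_{i=1}^{N+1}\sum_{k=0}^{i-1}(-1)^k (D_t^k\mathbf\flux_i)\,\partial_t^{i-1-k}f$ is linear in $f,\dots,\partial_t^{N}f$ with coefficients not involving $f$. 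Substituting this into \eqref{conslaw} turns it into $\big(D_t(\dens+\Div\mathbf\triv)+f(t)\,\Div\mathbf\chargeflux\big)|_\Esp=0$. This is again linear in $f$ and its $t$-derivatives, and the term $\dens+\Div\mathbf\triv$ contains no $f$-derivative above $\partial_t^{N}f$, so applying the multiplier argument once more and inducting downward starting from the coefficient of $\partial_t^{N+1}f$ forces $(\dens+\Div\mathbf\triv)|_\Esp=0$; the remaining coefficient of $f$ then gives $(\Div\mathbf\chargeflux)|_\Esp=0$, which is the spatial-flux conservation law \eqref{chargeflux.conslaw}.

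Finally I would establish local equivalence. The identity $\mathbf\flux - f(t)\,\mathbf\chargeflux = D_t\mathbf\triv$ holds off $\Esp$, and $\dens|_\Esp = \Div(-\mathbf\triv)|_\Esp$ by the previous step, so the conserved current $(\dens,\mathbf\flux)$ differs from $(0,\,f(t)\,\mathbf\chargeflux)$ by a current whose density reduces on $\Esp$ to the spatial divergence of $-\mathbf\triv$ and whose flux equals $-D_t(-\mathbf\triv)$ (with vanishing curl term); by the characterization of local triviality stated above, this difference is a locally trivial conservation law, hence \eqref{conslaw} is locally equivalent to \eqref{chargeflux.conslaw}. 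I expect the only real obstacle to be the bookkeeping in the integration-by-parts step: one must track precisely which powers $\partial_t^i f$ occur, in particular that $\dens+\Div\mathbf\triv$ carries no $\partial_t^{N+1}f$, since this degree count is exactly what makes the descending induction close. Everything else is routine, as the identity is purely formal and the two invocations of ``arbitrary $f$ forces coefficients to vanish on $\Esp$'' are standard in the multiplier framework.
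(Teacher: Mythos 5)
Your proposal is correct and follows essentially the same route as the paper: the same splitting of the continuity equation with respect to $f,\partial_t f,\ldots,\partial_t^{N+1}f$, the same potential (your $\mathbf\triv$ is the negative of the paper's $\mathbf\triv$ in \eqref{dens.triv}), the same charge flux $\mathbf\chargeflux=\sum_j(-D_t)^j\mathbf\flux_j$, and the same identification of the difference current $(\Div\mathbf\triv,-D_t\mathbf\triv)|_\Esp$ as locally trivial. The only organizational difference is that you obtain the flux decomposition $\mathbf\flux=f(t)\mathbf\chargeflux+D_t\mathbf\triv$ as a formal identity off $\Esp$ by integrating by parts in $t$ and then re-split the rewritten continuity equation by a downward induction, whereas the paper first solves the recursion for $\dens_i|_\Esp$ and then verifies the decomposition on $\Esp$ by a telescoping sum; both versions of the bookkeeping are sound and yield the identical conclusion.
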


\begin{proof}
There are three main steps.
We will show first that $T$ is locally trivial,
and next that $\mathbf\flux$ is locally trivial modulo a conserved flux. 
Finally, we will show that the resulting conservation law is locally equivalent to a spatial-flux conservation law. 

The $t$-derivative of the conserved density \eqref{dens.arbfunct} is given by
\begin{equation}\label{Dt.dens}
D_t \dens = \sum_{i=0}^{N} D_t \dens_i \partial_t^i f(t) + \sum_{i=0}^{N} \dens_i \partial_t^{i+1} f(t)
= D_t \dens_0 f(t) + \sum_{i=1}^{N} (D_t\dens_i +\dens_{i-1})\partial_t^i f(t) + \dens_N \partial_t^{N+1} f(t) . 
\end{equation}
Hence, the conservation law \eqref{conslaw} implies
\begin{equation}\label{Div.flux}
(D_t \dens_0+\Div\mathbf\flux_0)|_\Esp f(t) 
+\sum_{i=1}^{N} (D_t\dens_i +\dens_{i-1}+\Div\mathbf\flux_i)|_\Esp \partial_t^i f(t) 
+(\dens_N+\Div\mathbf\flux_{N+1})|_\Esp \partial_t^{N+1} f(t) =0.
\end{equation}
This equation splits with respect to $f(t),\partial_t f(t),\ldots,\partial_t^{N+1} f(t)$, 
because $f(t)$ is an arbitrary function of $t$, 
and so their separate coefficients yield the relations
\begin{align}
\Div\mathbf\flux_{N+1}|_\Esp & = -\dens_{N}|_\Esp , 
\label{dens.N.eqn}\\
\Div\mathbf\flux_i|_\Esp & = -(D_t \dens_i + \dens_{i-1})|_\Esp, 
\quad
i=1,\ldots,N-1, 
\label{dens.i.eqn}\\
\Div\mathbf\flux_0|_\Esp & = -D_t \dens_0|_\Esp . 
\label{flux.0.eqn}
\end{align}
Equations \eqref{dens.N.eqn} and \eqref{dens.i.eqn}
can be solved recursively for $\dens_N,\ldots,\dens_0$:
\begin{equation}\label{dens.i}
\dens_i|_\Esp = -\sum_{j=0}^{N-i} (-D_t)^j\Div\mathbf\flux_{i+j+1}|_\Esp, 
\quad
i=0,\ldots,N . 
\end{equation}
This shows that each $\dens_i$ is locally trivial,
namely
\begin{equation}\label{dens.triv.i}
\dens_i|_\Esp = \Div\mathbf\triv_i|_\Esp,
\end{equation}  
with
\begin{equation}\label{triv.i}
\mathbf\triv_i= -\sum_{j=0}^{N-i} (-D_t)^j\mathbf\flux_{i+j+1},
\quad
i=0,\ldots,N . 
\end{equation}
Therefore,
\begin{equation}\label{dens.triv}
\dens|_\Esp = \Div\mathbf\triv|_\Esp,
\quad
\mathbf\triv = -\sum_{i=0}^{N} \partial_t^i f(t) \sum_{j=0}^{N-i} (-D_t)^j \mathbf\flux_{i+j+1}
\end{equation}  
is locally trivial.

Next, the $t$-derivative of $\mathbf\triv$ is given by 
\begin{equation}\label{Dt.triv}
\begin{aligned}
D_t\mathbf\triv 
& = \sum_{i=0}^{N} \big(
\partial_t^{i+1} f(t)\, \mathbf\triv_{i} + \partial_t^{i} f(t) D_t\mathbf\triv_{i}
\big)
\\
&
= f(t) D_t\mathbf\triv_{0}
+ \sum_{i=1}^{N} \partial_t^{i} f(t)(D_t\mathbf\triv_{i} + \mathbf\triv_{i-1})
+\partial_t^{N+1}f(t)\, \mathbf\triv_{N} , 
\end{aligned}
\end{equation}
which can be written in terms of the expressions $\mathbf\flux_i$.
In the last term in \eqref{Dt.triv}, 
\begin{equation}\label{last.term}
\mathbf\triv_{N}
= -\mathbf\flux_{N+1}
\end{equation}
holds from expression \eqref{triv.i} for $i=N$. 
Likewise the first term in \eqref{Dt.triv} can be simplified by observing 
that 
$D_t\mathbf\triv_{0}
= \sum_{j=0}^{N} (-D_t)^{j+1}\mathbf\flux_{j+1}$
holds from expression \eqref{triv.i} for $i=0$,
and that 
$\sum_{j=0}^{N} (-D_t)^{j+1}\Div\mathbf\flux_{j+1}|_\Esp = -\Div \mathbf\flux_0|_\Esp$
follows from equation \eqref{flux.0.eqn} combined with equation \eqref{dens.i} for $i=0$. 
This implies 
\begin{equation}\label{first.term}
D_t\mathbf\triv_{0}
= \sum_{j=0}^{N} (-D_t)^{j+1}\mathbf\flux_{j+1}
= (\mathbf\chargeflux -\mathbf\flux_0), 
\quad
\Div\mathbf\chargeflux|_\Esp=0 ,
\end{equation}
holds for some vector function $\mathbf\chargeflux(t,x,u,\partial u,\ldots,\partial^{l'} u)$.
For the remaining terms in \eqref{Dt.triv}, 
\begin{equation}\label{middle.terms}
(D_t\mathbf\triv_{i} + \mathbf\triv_{i-1})
= \sum_{j=0}^{N-i} (-D_t)^{j+1}\mathbf\flux_{i+j+1}
-\sum_{j=0}^{N-i+1} (-D_t)^{j}\mathbf\flux_{i+j}
= -\mathbf\flux_i ,
\quad
i=1,\ldots,N
\end{equation}
reduces to a telescoping sum.
Substituting equations \eqref{last.term}--\eqref{middle.terms}
into equation \eqref{Dt.triv}
yields 
\begin{equation}
D_t\mathbf\triv = 
f(t)\mathbf\chargeflux -\sum_{i=0}^{N+1} \partial_t^i f(t) \mathbf\flux_i . 
\end{equation}
This equation combined with the flux expression \eqref{flux.arbfunct}
then gives 
\begin{equation}\label{flux.triv}
\mathbf\flux|_\Esp = -D_t\mathbf\triv|_\Esp  + f(t)\mathbf\chargeflux|_\Esp , 
\end{equation}
which consists of a locally trivial term plus a divergence-free term. 

Finally,
consider the locally trivial conserved current
$(\Div\mathbf\triv,-D_t\mathbf\triv)|_\Esp$. 
When this conserved current is added to the conserved current
given by the density \eqref{dens.triv} and the flux \eqref{flux.triv},
the resulting locally equivalent conservation law is given by the conserved current
\begin{equation}
(0,f(t)\mathbf\chargeflux)|_\Esp 
\end{equation}  
as shown by equations \eqref{dens.triv} and \eqref{flux.triv}. 
To conclude the proof, 
note that equation \eqref{first.term} yields the expression for $\mathbf\chargeflux$ 
given in equation \eqref{chargeflux.conslaw}. 
\end{proof}

Existence of a spatial-flux conserved current \eqref{chargeflux.conslaw} will require 
that a dynamical PDE \eqref{dyn.pde} have a certain form. 
The problem of exactly determining this form can be addressed by 
first working out the general form for the multiplier, 
which must look like 
$\sum_{i=0}^{N'} \partial_t ^i f(t) Q_i(t,x,u,\partial u,\ldots,\partial^{l'_i} u)$, 
and then generalizing the methods in \Ref{PopBih} using Euler operators 
for determining the form of PDEs that possess multipliers $f(t)$. 
This is a non-trivial problem and will be left for elsewhere. 

It will be useful for the sequel to remark that
the underlying mathematical setting for all the developments here
is calculus on jet space \cite{Olv-book}.
For a given PDE \eqref{dyn.pde}, the associated jet space is simply
the coordinate space $J= (t,x,u,\partial u,\partial^2 u,\ldots)$.
The solution space $\Esp$ of the PDE is represented by the surface
defined by equation \eqref{dyn.pde} in $J$
along with the corresponding surfaces given by all derivatives of equation \eqref{dyn.pde}.
Each solution $u(t,x)$ of the PDE is represented by a point on these surfaces. 
When an expression such as $\dens$ or $\mathbf\flux$ is evaluated on the solution space $\Esp$,
this means that the expression $\dens|_\Esp$ or $\mathbf\flux|_\Esp$ is evaluated on the solution surfaces in $J$.
For purposes of computation,
the evaluation of any expression on $\Esp$ can be carried out by
first writing the PDE \eqref{dyn.pde} in a solved form
with respect to some leading derivative, 
and then substituting the leading derivative and all of its derivatives into the expression.
See \Ref{Anc-review} for more details and examples.

\subsection{Topological charges for dynamical PDEs with a spatial-divergence form}\label{sec:topologicalcharge}

As as a consequence of Theorem~\ref{thm:main}, 
the following result is obtained. 

\begin{cor}\label{cor:topological.charge}
For a dynamical PDE \eqref{dyn.pde}  in $n>1$ spatial dimensions, 
any conservation law that involves an arbitrary function of $t$
will yield a corresponding conserved topological charge
\begin{equation}
\oint_{\partial\Omega} \mathbf\chargeflux|_\Esp\cdot d\mathbf{\A} =0
\end{equation}
associated to any spatial domain $\Omega\subseteq\Rnum^n$
whose boundary is a closed hypersurface $\partial\Omega$. 
The topological nature of the charge is that it is unchanged 
under continuous deformations of the hypersurface. 
\end{cor}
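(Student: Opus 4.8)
The plan is to obtain the Corollary directly from Theorem~\ref{thm:main} together with the divergence theorem. The starting point is that Theorem~\ref{thm:main} reduces the given conservation law, up to local equivalence, to the spatial-flux conservation law \eqref{chargeflux.conslaw}, namely $f(t)\Div\mathbf\chargeflux|_\Esp=0$ with $\mathbf\chargeflux=\sum_{j=0}^{N+1}(-D_t)^j\mathbf\flux_j$ containing no occurrence of $f(t)$ or its $t$-derivatives. Since $f(t)$ runs over arbitrary functions of $t$, I would specialize it to a nowhere-vanishing choice (for instance $f(t)\equiv1$) and conclude that the identity
\begin{equation}\label{cor.divfree}
\Div\mathbf\chargeflux|_\Esp=0
\end{equation}
holds on the solution space $\Esp$; for $n>1$ this is the vanishing-divergence condition on the spatial vector $\mathbf\chargeflux$.

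Next I would fix an arbitrary solution $u(t,x)$ of the PDE on a spatial domain $\Omega\subseteq\Rnum^n$ whose boundary $\partial\Omega$ is a closed hypersurface, integrate \eqref{cor.divfree} over $\Omega$, and apply the divergence theorem to get
\begin{equation}\label{cor.charge}
\oint_{\partial\Omega}\mathbf\chargeflux|_\Esp\cdot d\mathbf{\A}=\int_{\Omega}\Div\mathbf\chargeflux|_\Esp\,d\V=0 ,
\end{equation}
which is the asserted topological charge; equivalently, \eqref{cor.charge} is the global balance equation \eqref{global.conslaw} for the conserved current $(0,f(t)\mathbf\chargeflux)$ produced by Theorem~\ref{thm:main}, after cancelling the nonzero factor $f(t)$. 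Since the right-hand side of \eqref{cor.charge} vanishes for every $t$, the charge is conserved. Running the same argument on a shell-type region $\Omega$, whose boundary consists of two closed hypersurfaces $S_1$ and $S_2$, yields $\oint_{S_1}\mathbf\chargeflux|_\Esp\cdot d\mathbf{\A}=\oint_{S_2}\mathbf\chargeflux|_\Esp\cdot d\mathbf{\A}$; hence the flux of $\mathbf\chargeflux|_\Esp$ through a closed hypersurface is unchanged under any continuous deformation that sweeps out a region contained in the domain of $u$, i.e.\ it depends only on the homology class of the hypersurface. This is exactly the non-triviality dichotomy of Proposition~\ref{prop:nontriv.flux.multiD}: \eqref{cor.divfree} guarantees $\mathbf\chargeflux|_\Esp=\Div\underline{\mathbf\trivflux}|_\Esp$ locally, so a nonzero charge is precisely the obstruction to the existence of a global skew-tensor potential $\underline{\mathbf\trivflux}$.

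The computations involved are routine, so there is no genuinely hard step. The only points I expect to need care are (i) the passage from $f(t)\Div\mathbf\chargeflux|_\Esp=0$ for the given family to \eqref{cor.divfree} as an identity on $\Esp$, which relies on $f(t)$ being truly arbitrary, and (ii) enough regularity of $u$ and of $\partial\Omega$ for the divergence theorem to apply to the (generally high-order) jet-space expression $\mathbf\chargeflux$. The hypothesis $n>1$ plays essentially no role beyond the interpretation: for $n=1$ the ``closed hypersurface'' degenerates to a pair of endpoints, and \eqref{cor.divfree} instead expresses that the source/sink $\mathbf\chargeflux|_\Esp$ is $x$-independent rather than yielding a topological charge.
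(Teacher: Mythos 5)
Your argument is correct and follows essentially the same route as the paper: the Corollary is obtained there by combining Theorem~\ref{thm:main} (local equivalence to the spatial-flux conservation law $f(t)\Div\mathbf\chargeflux|_\Esp=0$) with the earlier discussion of spatial-flux conservation laws, where the divergence theorem on a ball-type or shell-type domain gives the vanishing charge \eqref{multiD.charge} and its deformation invariance \eqref{multiD.fluxes}. Your added remarks on specializing $f(t)$ and on the homological interpretation via Proposition~\ref{prop:nontriv.flux.multiD} are consistent with the paper's treatment.
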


This result will now be specialized to dynamical PDEs with a spatial-divergence form
\begin{equation}\label{pde.divform}
\k\cdot\nabla u_t = \nabla\cdot {\mathbf F}(t,x,u,\nabla u,\ldots,\nabla^m u)
\end{equation}
where 
$\mathbf{F}= (F^1,\ldots,F^n)$ is a vector function,
and $\k=(k^1,\ldots,k^n)$ is a constant unit vector. 
Any such PDE can be directly expressed as a spatial-flux conservation law
\begin{equation}\label{nD.flux.conslaw}
\nabla\cdot\mathbf\chargeflux|_\Esp =0,
\quad
\mathbf\chargeflux= u_t \k - \mathbf{F} . 
\end{equation}  
This conservation law is non-trivial if and only if 
the flux does not have the form of a spatial curl, 
$\mathbf\chargeflux|_\Esp \neq \Div\underline{\mathbf\trivflux}|_\Esp$, 
for all skew-tensor functions
$\underline{\mathbf\trivflux}(t,x,u,\partial u,\ldots,\partial^l u)$.
But the relation $u_t\k -\mathbf{F}=\Div\underline{\mathbf\trivflux}|_\Esp$ 
would clearly be inconsistent in any finite jet space
because the left side contains $u_t$ but no derivatives of $u_t$,
while if $\underline{\mathbf\trivflux}|_\Esp$ contains $u_t$ (or its derivatives) 
then the right side will always contain at least $\nabla u_{t}$ (or its derivatives). 
This argument can be made rigorous by employing 
the spatial Euler operator \cite{BCA-book,Anc-review}
similarly to the methods in \Ref{PopBih}.

We will now look at the content of the resulting global conservation law
first for PDEs in one spatial dimension
and then for PDEs in two and more spatial dimensions. 

Consider a dynamical PDE \eqref{pde.divform} in one spatial dimension,
$u_{tx} = D_x F(t,x,u,\partial_x u,\ldots,\partial_x^m u)$. 
Every such PDE has the form of a spatial-flux conservation law $D_x\flux|_\Esp=0$
in which the flux is given by $\flux = u_t - F$.
For solutions $u(t,x)$ on a connected interval $\Omega\subseteq\Rnum$,
the corresponding global form of the conservation law is obtained by
integration with respect to $x$,
yielding
\begin{equation}\label{1D.sourcesink.conslaw}
\begin{aligned}
& \int_{\Omega} D_x(u_t-F)|_\Esp\,dx = 0 \\&\quad
=((u_t-F)|_{\partial\Omega})|_\Esp
\end{aligned}
\end{equation}
which is a boundary term at the two endpoints $\partial\Omega$ of the interval. 
This boundary-type conservation law corresponds to the integrated form of the PDE
\begin{equation}\label{1D.ut}
u_{t} = F(t,x,u,\partial_x u,\ldots,\partial_x^m u) + w(t)
\end{equation}
whereby $u_t-F = w$ is independent of $x$.
Here $w$ can be viewed as a source/sink
in the resulting evolution equation \eqref{1D.ut}.
Note that, through this equation,
each solution $u(t,x)$ of the PDE
$u_{tx} = D_x F(t,x,u,\partial_x u,\ldots,\partial_x^m u)$
will give rise to a corresponding function $w(t)$
which, in general, will be non-zero. 
As a consequence,
the source/sink conservation law \eqref{1D.sourcesink.conslaw}
is non-trivial. 

The situation differs for a dynamical PDE \eqref{pde.divform} 
in two spatial dimensions.
First, note that we can put $\k=(1,0)$ by a point transformation,
so that the PDE takes the specific form 
\begin{equation}\label{2D.eqn}
u_{tx} = \nabla_x F^x(t,x,y,u,\nabla u,\ldots,\nabla^m u) + \nabla_y F^y(t,x,y,u,\nabla u,\ldots,\nabla^m u) . 
\end{equation}
This is a spatial-flux conservation law
$(D_x\flux^x +D_y\flux^y)|_\Esp=0$ 
in which the flux is given by the vector function
\begin{equation}\label{2D.flux}
(\flux^x,\flux^y) = (u_t - F^x,-F^y) . 
\end{equation}
The corresponding global form of the conservation law is obtained by
integration with respect to $x,y$ over a given connected domain $\Omega\subseteq\Rnum^2$,
yielding
\begin{equation}\label{2D.charge.conslaw}
\begin{aligned}
& \int_{\Omega} (D_x(u_t - F^x) + D_y(-F^y))|_\Esp \,dxdy =0 \\&\quad
= \oint_{\partial\Omega} F^y\,dx + (u_t - F^x)\,dy|_\Esp 
\end{aligned}
\end{equation}
for all solutions $u(t,x)$. 
This line integral is the two-dimensional form of the topological charge \eqref{multiD.charge}.
It is non-trivial because the vector function \eqref{2D.flux}
cannot be expressed in a curl form $(D_y\trivflux,-D_x\trivflux)$
for any scalar function $\trivflux$ of $t$, $x$, $y$, $u$, and derivatives of $u$ up to a finite order,
for all solutions of the PDE \eqref{2D.eqn}. 
(This follows from the argument used to show non-triviality of the conservation law \eqref{nD.flux.conslaw}.)

The topological charge conservation law \eqref{2D.charge.conslaw}
can also be viewed as corresponding to the integrated form of the PDE \eqref{2D.eqn}
as given by
\begin{equation}\label{2D.integrated.eqn}
u_t-F^x(t,x,y,u,\nabla u,\ldots,\nabla^m u) = w_y,
\quad
F^y(t,x,y,u,\nabla u,\ldots,\nabla^m u) =w_x . 
\end{equation}
Note that each solution $u(t,x,y)$ of the PDE \eqref{2D.eqn}
will give rise to a corresponding function $w(t,x,y)$. 
In particular, $w$ depends nonlocally on $u$,
and so the equations \eqref{2D.integrated.eqn}
cannot be viewed as surfaces in the jet space $J= (t,x,u,\partial u,\partial^2 u,\ldots)$.
In terms of $w$, 
the topological charge conservation law \eqref{2D.charge.conslaw}
takes the form $\oint_{\partial\Omega} w_x\,dx + w_y\,dy=0$
which holds for any function $w(t,x,y)$
by the gradient line integral theorem. 

There is also a dynamical interpretation of the topological charge conservation law \eqref{2D.charge.conslaw},
which comes from expressing it in the form 
\begin{equation}\label{2D.dyn.flux.conslaw}
\frac{d}{dt}\oint_{C} u\,dy|_\Esp 
= \oint_{C} (F^x\,dy - F^y\,dx)|_\Esp 
\end{equation}
where $C$ is any closed curve in the $(x,y)$-plane. 
The line integral on the left side is the net circulation $\oint_{C} u\,dy|_\Esp$ of 
the transverse transport vector $(0,u)= u\k_\perp$ around the closed curve, 
while the line integral on the right side is the net circulation of 
the vector field $(-F^y,F^x)|_\Esp$. 
Thus, this vector field drives the rate of change of the net circulation of 
the transport vector. 

In three or more spatial dimensions, 
the global form of the spatial-flux conservation law \eqref{nD.flux.conslaw}
is given by the vanishing flux integral 
\begin{equation}\label{fluxcharge.conslaw}
\oint_{S} (\k u_t -\mathbf{F})|_\Esp\cdot d\mathbf{\A} =0
\end{equation}
on any closed hypersurface $S$ in $\Rnum^n$.
This integral describes a conserved topological charge
for all solutions $u(t,x)$ of the PDE \eqref{pde.divform}. 
It can be interpreted dynamically as stating that 
the rate of change of the net flux of the transport vector $u \k$ 
through a closed hypersurface 
is balanced by the net flux of the vector field $\mathbf{F}$:
\begin{equation}\label{dyn.flux.conslaw}
\frac{d}{dt} \oint_{S} u \k|_\Esp \cdot d\mathbf{\A} 
= \oint_{S} \mathbf{F}|_\Esp \cdot d\mathbf{\A} . 
\end{equation}

Further understanding will be provided by the examples in Section~\ref{sec:examples}.
A general discussion of topological conservation laws 
in three dimensions can be found in \Ref{AncChe2018}.

\section{Applications}\label{sec:applications}

We will present two innovative applications of the results in Theorem~\ref{thm:main} and Corollary~\ref{cor:topological.charge}:\\
\indent$\bullet$
constructing spatial potential systems; \\
\indent$\bullet$
uncovering integral constraint relations on solutions. \\
For each application, we first consider a general dynamical PDE \eqref{dyn.pde},
and then we will specialize the presentation to the situation
when the PDE has a spatial-divergence form \eqref{pde.divform}. 
Examples will be given in section~\ref{sec:examples}.

\subsection{Potential systems}

Any spatial-flux conservation law \eqref{chargeflux.conslaw}
for a given dynamical PDE \eqref{dyn.pde} in $n>1$ spatial dimensions
can be converted into a spatial potential system
via the introduction of a potential $\underline{\mathbf w}(t,x)$
in the form of a skew-tensor. 
This is achieved by expressing the spatial-flux vector in a curl form 
\begin{equation}\label{div.curl.flux}
\mathbf\chargeflux(t,x,u,\partial u,\ldots) =\Div\underline{\mathbf w}
\end{equation}
in terms of the potential.
Then $\Div\mathbf\chargeflux(t,x,u,\partial u,\ldots) =0$
holds identically (off of the solution space $\Esp$ of the PDE),
as $\Div(\Div\underline{\mathbf w})=\Div^2\underline{\mathbf w}$
is identically zero
due to the Hessian matrix $\Div^2$ being symmetric
while the potential $\underline{\mathbf w}$ is skew. 
Note this is the $n$-dimensional version of the identity that 
the divergence of a curl in three dimensions is identically zero. 

We view the vector-flux equation \eqref{div.curl.flux}
as defining a potential system with variables $(u,\underline{\mathbf w})$ 
associated to the given dynamical PDE \eqref{dyn.pde}.
Such a potential system is purely spatial, 
since it does not contain a time derivative of $\underline{\mathbf w}$, 
in contrast to the more familiar potential systems that arise from 
ordinary conservation laws in which the conserved density is non-trivial
(see e.g. \Ref{BCA-book}). 
Moreover, it exists only for $n>1$, 
since curls do not exist in dimension $n=1$. 

There is a natural correspondence between
the solution space of the dynamical PDE
and the solution space of the spatial potential system 
modulo some gauge freedom which we will describe next. 
This gauge freedom is different in form than the gauge freedom inherent in non-spatial potential systems. 

Consider, firstly, the situation in two spatial dimensions ($n=2)$.
The spatial-flux vector $\mathbf\chargeflux=(\chargeflux^x,\chargeflux^y)$
has two components,
while the potential can be expressed explicitly as a skew matrix
\begin{equation}
  \underline{\mathbf w}=\begin{pmatrix}0&w\\-w&0\end{pmatrix}
\end{equation}  
involving a single scalar variable $w$. 
The spatial potential system \eqref{div.curl.flux} is given by 
\begin{equation}\label{2D.div.curl.flux}
\chargeflux^x = w_y,
\quad
\chargeflux^y = -w_x . 
\end{equation}
It is unchanged under the gauge freedom
\begin{equation}\label{2D.gaugefreedom}
w\to w+ \chi(t)
\end{equation}
where $\chi(t)$ is an arbitrary differentiable function of $t$.
Every solution $(u(t,x),w(t,x))$ of this potential system \eqref{2D.div.curl.flux}
yields a solution $u(t,x)$ of the given dynamical PDE.
Conversely, every solution $u(t,x)$ of the given dynamical PDE
determines a set of solutions $(u(t,x),w(t,x)+\chi(t))$ of the potential system.
Modulo the gauge freedom, this correspondence is one-to-one.

For a dynamical PDE that has a spatial-divergence form \eqref{2D.eqn}
in two spatial dimensions, 
the spatial potential system \eqref{2D.div.curl.flux} is given by
the pair of equations \eqref{2D.integrated.eqn} in terms of $(u,w)$. 

By comparison, non-spatial potential systems in two spatial dimensions
involve three potential variables 
and have a gauge freedom which involves 
arbitrary functions of all three independent variables $t,x,y$. 
Such gauge freedom requires that a gauge condition be appended to the potential system 
so that it can yield potential symmetries and potential conservation laws \cite{AncBlu1997b,BCA-book}. 
In contrast, 
a spatial potential system \eqref{2D.div.curl.flux} has less gauge freedom, 
and therefore gauge conditions are not necessarily required. 

Secondly, consider the situation in three spatial dimensions ($n=3$).
Now the spatial-flux vector has three components
$\mathbf\chargeflux=(\chargeflux^x,\chargeflux^y,\chargeflux^z)$,
and the potential can be expressed explicitly as a $3\times 3$ skew matrix
\begin{equation}
\underline{\mathbf w}=\begin{pmatrix}0&w^z&-w^y\\-w^z&0&w^x\\w^y&-w^x&0\end{pmatrix}
\end{equation}
which involves three scalar variables $(w^x,w^y,w^z)$. 
The resulting spatial potential system \eqref{div.curl.flux} has the form 
\begin{equation}\label{3D.div.curl.flux}
\chargeflux^x=(w^z)_y-(w^y)_z,
\quad
\chargeflux^y=(w^x)_z-(w^z)_x,
\quad
\chargeflux^z=(w^y)_x-(w^x)_y . 
\end{equation}
It is unchanged under the gauge freedom
\begin{equation}\label{3D.gaugefreedom}
(w^x,w^y,w^z)\to(w^x+\partial_x\chi(t,x,y,z),w^y+\partial_y\chi(t,x,y,z),w^z+\partial_z\chi(t,x,y,z))
\end{equation}
where $\chi(t,x,y,z)$ is an arbitrary differentiable function of all independent variables.
Modulo this gauge freedom, there is a one-to-one correspondence between 
the solution space of the potential system and the solution space of the given dynamical PDE.

When a dynamical PDE in three dimensions 
has a spatial-divergence form
\begin{equation}\label{3D.eqn}
\begin{aligned}
u_{tx} &
= \nabla_x F^x(t,x,y,z,u,\nabla u,\ldots,\nabla^m u) + \nabla_y F^y(t,x,y,z,u,\nabla u,\ldots,\nabla^m u)
\\&\qquad 
+ \nabla_z F^z(t,x,y,z,u,\nabla u,\ldots,\nabla^m u) , 
\end{aligned}
\end{equation}
its spatial potential system \eqref{3D.div.curl.flux} is given by
\begin{subequations}\label{3D.integrated.eqn}
\begin{align}
u_t -F^x(t,x,y,z,u,\nabla u,\ldots,\nabla^m u) & = (w^z)_y-(w^y)_z,
\\
F^y(t,x,y,z,u,\nabla u,\ldots,\nabla^m u) & = (w^x)_z-(w^z)_x,
\\
F^z(t,x,y,z,u,\nabla u,\ldots,\nabla^m u) & = (w^y)_x-(w^x)_y, 
\end{align}
\end{subequations}
in terms of $(u,w^x,w^y,w^z)$. 

Non-spatial potential systems in three spatial dimensions involve six potentials and four gauge functions. 
Thus, spatial potential systems \eqref{3D.div.curl.flux} are considerably easier to use
(in particular, they require only a single gauge condition). 

A similar correspondence extends to $n>3$ spatial dimensions,
where the spatial potential system \eqref{div.curl.flux} for a dynamical PDE
with a spatial-divergence form \eqref{pde.divform}
is given by 
\begin{equation}
\k u_t -{\mathbf F}(t,x,u,\nabla u,\ldots,\nabla^m u)
=\Div\underline{\mathbf w}
\end{equation}
for $(u,\underline{\mathbf w})$. 
The gauge freedom in the skew-tensor variable consists of
\begin{equation}
\underline{\mathbf w}(t,x) \to\underline{\mathbf w}(t,x) + \Div\gaugevar(t,x)
\end{equation}
with $\gaugevar(t,x)$ being an antisymmetric rank-3 tensor
whose components are arbitrary differentiable functions of all independent variables $(t,x)$.
In particular,
note that $\Div\Div\gaugevar(t,x)=\Div^2\gaugevar(t,x)$ is identically zero
since $\Div^2$ is symmetric while $\gaugevar$ is totally antisymmetric.

\subsection{Integral constraint relations}

For a given dynamical PDE \eqref{dyn.pde} in $n>1$ spatial dimensions, 
any conservation law \eqref{dens.arbfunct}--\eqref{flux.arbfunct}
involving an arbitrary function $f(t)$ and its derivatives $\partial_t^i f(t)$,
$i=0,1,\ldots,N$, 
gives rise to the set of $N+1$ divergence relations \eqref{dens.triv.i}--\eqref{triv.i}
which hold for all solutions $u(t,x)$ of the PDE.
Each of these relations can be integrated
over any given spatial domain $\Omega\subseteq\Rnum^n$ to obtain
an associated integral relation 
\begin{equation}\label{integral.relation}
\int_{\Omega} \dens_i|_\Esp\,d\V
= \oint_{\partial\Omega} \mathbf\triv_i|_\Esp\cdot d\mathbf\A, 
\quad
i=0,\ldots,N
\end{equation}  
where 
$\dens_i$ is a scalar function of $t$, $x$, $u$, and derivatives of $u$,
and where 
$\mathbf\triv_i$ is a vector function of $t$, $x$, $u$, and derivatives of $u$. 
This relation \eqref{integral.relation}
holds for all solutions $u(t,x)$ of the PDE.
It shows that the volume integral on the left-hand side
can be evaluated entirely in terms of the values of $u$ and its derivatives
on the domain boundary
appearing in the hypersurface flux integral on the righthand side.

We remark that, off of the solution space of a given PDE,
an integral relation \eqref{integral.relation}
is equivalent to a divergence-type identity
\begin{equation}\label{dens.div.id}
\dens = \Div \mathbf\triv + R(G)
\end{equation}
where $R$ is a linear differential operator in total derivatives 
whose coefficients are non-singular on the solution space $\Esp$ of the PDE.

An integral relation \eqref{integral.relation} is significant
when the Cauchy problem for a given dynamical PDE is considered
on a spatial domain $\Omega\subseteq\Rnum^n$.
For simplicity, suppose that the PDE has a spatial divergence form \eqref{pde.divform},
whereby it can be expressed as a nonlocal evolution equation 
$u_t = \k\cdot{\mathbf F} + (\k\cdot\nabla)^{-1}{\mathbf F}_\perp$
where $\k\cdot{\mathbf F}_\perp=0$. 
The Cauchy problem then consists of specifying
initial data $u_0(x)$ for $u$ at $t=0$ on $\Omega$.
There are two basic situations:
$\Omega\subset\Rnum^n$ is finite;
or $\Omega=\Rnum^n$ is infinite. 

When the domain is finite, 
if boundary conditions for $u$ are posed on $\partial\Omega$ for $t\geq0$
such that $\oint_{\partial\Omega} \mathbf\triv\cdot d\mathbf\A=0$,
then the initial data must satisfy the restriction
\begin{equation}\label{u0.finitedomain}
\int_{\Omega} \dens|_{u=u_0}\,d\V =0 . 
\end{equation}
Likewise, when domain is infinite,
if $u$ obeys asymptotic conditions posed for $t\geq0$
such that $\lim_{r\to\infty}\oint_{S^{n-1}(r)} \mathbf\triv\cdot d\mathbf\A=0$
where $S^{n-1}(r)$ is a sphere of radius $r>0$ in $\Rnum^n$, 
then the initial data must satisfy the restriction
\begin{equation}\label{u0.infinitedomain}
\int_{\Rnum^n} \dens|_{u=u_0}\,d\V =0 . 
\end{equation}

Such a restriction \eqref{u0.finitedomain} or \eqref{u0.infinitedomain}
has significant implications if the integral is
a non-negative energy expression
or a norm in a function space such as $L^p$ or $H^s$. 
In this situation, 
the integral would be a priori non-negative for all non-trivial solutions
$u(t,x)$ of the given dynamical PDE on the spatial domain $\Omega$,
which seems to suggest a contradiction. 
But in fact it implies that there must exist some a priori restrictions
on the possible kinds of initial conditions and boundary conditions 
that can be posed to have the resulting initial-boundary value problem for the PDE
be well-posed in the function space determined by the norm
$\int_\Omega T|_\Esp\,dV$.

To illustrate in general how such a non-trivial integral relation could exist,
consider a dynamical PDE with a spatial-divergence form \eqref{2D.eqn}
in two dimensions:
$G = u_{tx} -D_x F^x -D_y F^y=0$. 
For simplicity, suppose that the PDE possesses a scaling symmetry
$t\to\lambda t$, $x\to \lambda^a x$, $y\to \lambda^b y$, $u\to \lambda^c u$,
with scaling weights $a,b,c$. 
We can derive an integral relation for $T=u_x^{p+1}$, 
with $p$ assumed to be a positive integer, 
by the following steps. 

First, we start from a scaling-homogeneous multiplier of the form
\begin{equation}
  Q=f(t) u_x^p + f'(t) y
\end{equation}  
where the constants $p,a,b,c$ are related by $(c-a)p=b-1$ 
due to scaling homogeneity.
Next, we observe 
\begin{equation}
\begin{aligned}
QG & =(f(t) u_x^p + f'(t) y)(u_{tx} -D_xF^x -D_yF^y)
\\
& = D_t( \tfrac{1}{p+1} u_x^{p+1}f(t) ) +D_x( y (u_t-F^x)f'(t) ) -D_y( \tfrac{1}{2p+1} u_x^{2p+1} f(t) +y F^y f'(t) )
\\&\qquad
- u_x^p(D_xF^x +D_y\tilde F^y)f(t) +\tilde F^y f'(t) ,
\quad
\tilde F^y = F^y -\tfrac{1}{p+1} u_x^{p+1} .
\end{aligned}  
\end{equation}
This identity shows that $Q$ will be multiplier iff the last two terms are separately a total spatial derivative
\begin{equation}
u_x^p(D_xF^x +D_y\tilde F^y) = D_x X_0 + D_y Y_0,
\quad
\tilde F^y = D_x X_1 + D_y Y_1 , 
\end{equation}
for some functions $X_0$, $X_1$, $Y_0$, $Y_1$
of $u$ and its spatial derivatives as well as possibly $t,x,y$.
The necessary and sufficient conditions are given by
\begin{equation}
E_u(u_x^{p-1}(u_{xx}F^x +u_{xy}\tilde F^y)) =0, 
\quad
E_u(\tilde F^y)=0 ,
\end{equation}  
where $E_u$ is the Euler operator (variational derivative) with respect to $u$.
If we specify a maximum differential order for $F^x$ and $\tilde F^y$
as functions in jet space,
then this pair of equations splits with respect to all higher-order jet variables,
yielding a PDE system on $F^x$ and $\tilde F^y$.
The system turns out to have non-trivial solutions
as shown by the example of the mKP equation
in section~\ref{sec:examples}.

Hence, we obtain a conservation law of the form \eqref{dens.arbfunct} and \eqref{flux.arbfunct}
with
\begin{subequations}
\begin{gather}
T_0  =\tfrac{1}{p+1} u_x^{p+1},
\\
\flux^x_0 = -X_0, 
\quad
\flux^x_1 = y(u_t-F^x) +X_1 , 
\\
\flux^y_0 = -\tfrac{1}{2p+1} u_x^{2p+1} -Y_0,
\quad
\flux^y_1 = -y F^y +Y_1 , 
\end{gather}
\end{subequations}
where $N=0$. 
Finally, from Theorem~\ref{thm:main},
we obtain a single spatial divergence relation \eqref{dens.triv.i}--\eqref{triv.i}, 
which is explicitly given by 
\begin{equation}
\begin{aligned}
T_0|_\Esp & = \tfrac{1}{p+1} u_x^{p+1} \\
& = -(D_x\flux^x_1 + D_y\flux^y_1)|_\Esp = D_x( y(F^x-u_t)-X_1 )|_\Esp +D_y( y F^y -Y_1 )|_\Esp  . 
\end{aligned}
\end{equation}  
Such an identity is highly non-trivial because it expresses a power of $u_x$ as a local divergence. 
This yields a corresponding non-trivial integral relation
\begin{equation}
\int_{\Omega} \tfrac{1}{p+1} u_x^{p+1}\,d\V
= \oint_{\partial\Omega} ( y(F^x-u_t)-X_1 )|_\Esp dy +( Y_1 - y F^y)|_\Esp dx
\end{equation}    
where $\Omega\subseteq \Rnum^2$ is a domain in the $(x,y)$-plane,
and $\partial\Omega$ is its boundary curve. 
If $p$ is an odd integer,
then we conclude that the $L^{p+1}(\Omega)$ norm of $u_x$
is determined by the values of $u$ and derivatives of $u$ on the boundary. 
In particular,
this holds for any initial data posed for the given PDE \eqref{2D.eqn}. 
As a consequence,
for initial data on $\Omega=\Rnum^2$
with asymptotic decay conditions at $\partial\Omega=\lim_{r\to\infty} S^1(r)$,
the Cauchy problem for $u_t = F^x + \partial_x^{-1}(D_yF^y)$
cannot be well-posed when the initial data have
sufficiently rapid radial decrease such that
\begin{equation}
\| u_x \|_{L^{p+1}}<\infty
\text{ and } 
\lim_{r\to\infty}\oint_{S^1(r)} \big( (X_1+r(\partial_x^{-1}(D_yF^y)\sin\theta)\cos\theta +(Y_1-rF^y\sin\theta)\sin\theta \big)|_\Esp r\,d\theta =0
.
\end{equation}

In general, 
the connection established by equation \eqref{integral.relation} relating 
integral constraints \eqref{u0.finitedomain}--\eqref{u0.infinitedomain}
and conservation laws \eqref{dens.arbfunct}--\eqref{flux.arbfunct} involving an arbitrary function of $t$
provides a systematic way to detect and to find 
integral constraints and corresponding local differential identities.

\section{Examples}\label{sec:examples}

We will consider several examples of PDEs with a spatial divergence form
in one, two, and three spatial dimensions.
The examples come from applied mathematics and integrable system theory. 
For each PDE,
we first present conservation laws that involve an arbitrary function $f(t)$.
These conservation laws are obtained by the standard multiplier method, 
explained in \Ref{Olv-book,BCA-book,Anc-review}. 
We then write down the topological charge(s) and discuss the applications 
explained in section~\ref{sec:applications}.

\subsection{KdV equation}

The KdV equation $v_t + vv_x + v_{xxx}=0$, 
where the coefficients have been scaled to $1$,
is a well-known integrable system that describes uni-directional shallow water waves.
It has a Lagrangian formulation in terms of a potential $u$
given by $v=u_x$:
\begin{equation}\label{pKdV}
u_{tx} + u_xu_{xx} + u_{xxxx}=0 . 
\end{equation}
The potential has gauge freedom $u\to u+f(t)$ where $f(t)$ is arbitrary.

Equation \eqref{pKdV} matches the form of the general spatial divergence PDE \eqref{pde.divform}
for $n=1$ with 
\begin{equation}
F=-(\tfrac{1}{2}u_x^2 +u_{xxx}) . 
\end{equation}

Through Noether's theorem, 
all local conservation laws of equation \eqref{pKdV} arise from multipliers $Q$,
which coincide with the characteristics of variational symmetries.
The gauge freedom in $u$ corresponds to a variational point symmetry
and hence a multiplier 
\begin{equation}
Q=f(t) . 
\end{equation}
The corresponding conservation law is given by the conserved current
\begin{equation}
(\dens,\flux) = (0, (u_t + \tfrac{1}{2}u_x^2 + u_{xxx})f(t) ) . 
\end{equation}
This represents a spatial-flux conservation law.
Its global form on any connected interval $[a,b]$ consists of 
$f(t)(u_t + \tfrac{1}{2}u_x^2 + u_{xxx})|^{b}_{a} =0$
for all KdV solutions $u(t,x)$. 
The overall factor $f(t)$ can be dropped without loss of generality,
yielding the global conservation law
\begin{equation}
(u_t + \tfrac{1}{2}u_x^2 + u_{xxx})|^{b}_{a} =0
\end{equation}
holding on the KdV solution space. 
Since $a,b$ can be chosen freely,
this global conservation law can be expressed as 
$u_t =h(t) - \tfrac{1}{2}u_x^2 -u_{xxx}$,
with $h(t)$ describing an $x$-independent source/sink term
as explained in section~\ref{sec:topologicalcharge}. 
In particular, since $u$ describes a mass density, 
$h(t)>0$ will drive an increase in mass and $h(t)<0$ will drive a decrease in mass.

\subsection{KP equation}

The KP equation,
given by \cite{KadPet}
\begin{equation}\label{KP}
u_{tx} + (uu_{x} + u_{xxx})_x + \sigma u_{yy}=0,
\quad
\sigma=\pm 1
\end{equation}
up to a scaling of the coefficients,
is an integrable generalization of the KdV equation in two spatial dimensions.
It describes weakly-transverse shallow water waves when $\sigma=+1$,
and weakly-transverse thin film waves when $\sigma=-1$,
with transverse motion occurring with respect to the $x$ direction. 

This equation \eqref{KP} matches the form of the general spatial divergence PDE \eqref{pde.divform}
for $n=2$ with 
\begin{equation}
F^x=-(uu_x +u_{xxx}) = -(\tfrac{1}{2}u^2 + u_{xx})_x , 
\quad
F^y = -\sigma u_y = -(\sigma u)_y . 
\end{equation}
Note that both components $(F^x,F^y)$ are themselves a spatial divergence. 

Equation \eqref{KP} admits the two multipliers 
\begin{equation}\label{KP.Q1}
f(t),
\quad
y f(t) ,
\end{equation}
as expected from the discussion in section~\ref{sec:intro}.
A direct computation of all multipliers
that have the form $Q(t,x,y,u,u_t,u_x,u_y)$
and that involve an arbitrary function $f(t)$
yields two additional multipliers:
\begin{equation}\label{KP.Q2}
x f(t) -\tfrac{1}{2} \sigma y^2 f'(t),
\quad
xy f(t) -\tfrac{1}{6} \sigma y^3 f'(t) . 
\end{equation}
The existence of these multipliers is due to the second-order divergence form of the spatial terms in equation \eqref{KP}. 
These four multipliers \eqref{KP.Q1}--\eqref{KP.Q2}
correspond to all variational point symmetries
admitted by the Lagrangian form of the KP equation
as shown by the results in \Ref{AncGanRec2018}. 

The corresponding conservation laws of the KP equation,
which involve $f(t)$ and $f'(t)$,
are given by the following conserved currents $(\dens,\flux^x,\flux^y)$:
\begin{align}
& f(t)( 0, u_{t} +uu_{x} + u_{xxx}, \sigma u_{y} ) ;
\label{KP.conslaw1}
\\
& f(t)( 0, y(u_{t} +uu_{x} + u_{xxx}), \sigma (yu_{y}-u) ) ;
\label{KP.conslaw2}
\\
&\begin{aligned}
& f(t)\big( u , \tfrac{1}{2} u^2 +u_{xx}-x(u_{t}+uu_{x}+u_{xxx}), -\sigma x u_{y} \big)
\\&\quad
+ f'(t)\big( 0,
\tfrac{1}{2} \sigma y^2( u_{t}+ u u_{x} +u_{xxx} ) ,
{-y} u + \tfrac{1}{2} y^2 u_{y} 
\big) ;
\end{aligned}
\label{KP.conslaw3}
\\
&\begin{aligned}
& f(t)\big( yu, y(\tfrac{1}{2} u^2 +u_{xx})-xy(u_{t}+uu_{x} +u_{xxx}), \sigma x (u - yu_{y}) \big)
\\&\quad
+ f'(t)\big( 0, 
\tfrac{1}{6} \sigma y^3 ( u_{t}+ u u_{x} + u_{xxx}), 
{-\tfrac{1}{2} y^2} u +\tfrac{1}{6} y^3 u_{y} 
\big) .
\end{aligned}
\label{KP.conslaw4}
\end{align}

The first two conserved currents \eqref{KP.conslaw1} and \eqref{KP.conslaw2}
represent spatial-flux conservation laws.
In global form \eqref{2D.charge.conslaw},
they yield the topological charges 
$\oint_{C} F^y\,dx + (u_t - F^x)\,dy=0$
and 
$\oint_{C} (yF^y-\sigma u)\,dx + y(u_t - F^x)\,dy=0$
holding for KP solutions $u(t,x,y)$,
where $C$ is any closed curve in the $(x,y)$-plane.
Alternatively, they can be expressed as balance equations
\begin{align}
\frac{d}{dt} \oint_{C} u \,dy & = \oint_{C} \sigma u_y \,dx - (uu_x + u_{xxx})\,dy ,
\label{KP.charge1}
\\
\frac{d}{dt} \oint_{C} y u\,dy & = \oint_{C} \sigma (yu_y -u)\,dx - y(uu_x + u_{xxx})\,dy ,
\label{KP.charge2}
\end{align}
for the circulation of the transverse mass transport vector $(0,u)$ and its $y$-moment $(0,y u)$
around closed curves $C$. 
Note these vectors lie in the $y$ direction. 

If we take $f(t)=1$,
then the conservation laws given by the other two conserved currents \eqref{KP.conslaw3} and \eqref{KP.conslaw4}
describe balance equations for rate of change of
the mass $\int_{{\rm int}(C)} u\,dxdy$ 
and for rate of change of
the $y$-moment of the mass $\int_{{\rm int}(C)} yu\,dxdy$, 
holding for KP solutions $u(t,x,y)$,
where ${\rm int}(C)$ is the interior of the closed curve $C$. 
The divergence relation \eqref{dens.div.id} for each conservation law
yields the identities 
\begin{align}
& u   =
( \tfrac{1}{2} \sigma y^2(u_{t}+u u_{x} +u_{xxx}) )_x
- (yu-\tfrac{1}{2}y^2u_{y})_y
+ \tfrac{1}{2} \sigma y^2 G , 
\label{KP.id1}
\\
& yu =
( \tfrac{1}{6} \sigma y^3(u_{t}+ u u_{x} + u_{xxx}) )_x
- (\tfrac{1}{2}y^2u -\tfrac{1}{6}y^3u_{y})_y
+ \tfrac{1}{6} \sigma y^3 G ,
\label{KP.id2}
\end{align}
where $G =   u_{tx} + (uu_{x} + u_{xxx})_x + \sigma u_{yy}$.
When the Cauchy problem for the KP equation in the form 
$u_{t} + (uu_{x} + u_{xxx}) + \sigma \partial_x^{-1} u_{yy}=0$
is considered on $\Rnum^2$,
these two identities \eqref{KP.id1} and \eqref{KP.id2} imply that
initial data $u|_{t=0}=u_0(x,y)$ must satisfy the integral constraints \cite{MolSauTzv}
$\int_{\Rnum^2} u_0\,dxdy = 0$ and $\int_{\Rnum^2} yu_0\,dxdy =0$.  
The origin of these integral constraints from conservation laws has not been previously noticed. 

Moreover, due to the identities \eqref{KP.id1} and \eqref{KP.id2}, 
the conserved currents \eqref{KP.conslaw3} and \eqref{KP.conslaw4}
are locally equivalent to the following respective spatial-flux currents
\begin{align}
&\begin{aligned}
\big( 0,&\,
\tfrac{1}{2} u^2 +u_{xx} -x(u_t + uu_{x}+u_{xxx})  -\tfrac{1}{2} \sigma y^2(u_t u_x + u_{tx} + u_{xx} + u_{txxx}), \\&\quad
yu_t -\sigma x u_{y} -\tfrac{1}{2}y^2 u_{ty}
\big) , 
\end{aligned}
\label{KP.current3}
\\
&\begin{aligned}   
\big( 0,&\,
y(\tfrac{1}{2} u^2 +u_{xx}) -xy(u_{t}+uu_{x}+u_{xxx}) -\tfrac{1}{6} \sigma y^3(u_t u_x + u_{tx} + u_{xx} + u_{txxx}), \\&\quad
\sigma x (u-yu_{y}) +\tfrac{1}{2}y^2 u_{t} -\tfrac{1}{6}y^3 u_{ty}
\big) , 
\end{aligned}
\label{KP.current4}
\end{align}
given by Theorem~\ref{thm:main}, 
where, without loss of generality, we have dropped an overall factor $f(t)$.
Their global form \eqref{2D.charge.conslaw}
yields two non-trivial topological charges
\begin{align}
&\begin{aligned}
\oint_{C} & (-\tfrac{1}{2} u^2 - u_{xx} + x(u_t + uu_x + u_{xxx} ) + \tfrac{1}{2} \sigma y^2 (u_t u_x + u_{tx} + u_{xx} + u_{txxx} ))\, dy
\\&\qquad
+ (yu_t -\sigma xu_y -2 y u_{ty})\, dx = 0 , 
 \end{aligned}
\label{KP.charge3}
\\
&\begin{aligned}
\oint_{C} & (-y(\tfrac{1}{2} u^2 +u_{xx}) +xy(uu_{x}+u_{t}+u_{xxx}) +\tfrac{1}{6} \sigma y^3(u_t u_x + u_{tx} + u_{xx} + u_{txxx}) )\,dy
\\&\qquad
+ (\sigma x (u-yu_{y}) +\tfrac{1}{2}y^2 u_{t} -\tfrac{1}{6}y^3 u_{ty} )\,dx =0 , 
\end{aligned}
\label{KP.charge4}
\end{align}
which are independent of the closed curve $C$,
for KP solutions $u(t,x,y)$. 

Finally, 
the spatial potential systems constructed from 
the four spatial-flux conserved currents \eqref{KP.conslaw1}, \eqref{KP.conslaw2}, \eqref{KP.current3}, \eqref{KP.current4}
consist of pairs of equations for $(u,w)$:
\begin{align}
&
u_t + uu_x + u_{xxx} = w_y ,
\quad
u_y =-\tfrac{1}{\sigma} w_x ;
\label{KP.potsys1}
\\
&
y(u_t + uu_x + u_{xxx}) = w_y ,
\quad
yu_y-u =-\tfrac{1}{\sigma} w_x;
\label{KP.potsys2}
\\
&
\begin{aligned} 
& \tfrac{1}{2}u^2 +u_{xx} - x(u_t + uu_x + u_{xxx} ) -\tfrac{1}{2} \sigma y^2 (u_t u_x + u_{tx} + u_{xx} + u_{txxx} ) = w_y, 
\\&
yu_t -\sigma xu_y -\tfrac{1}{2} y^2 u_{ty} = -w_x ;
\end{aligned}
\label{KP.potsys3}
\\
&
\begin{aligned}
& y(\tfrac{1}{2} u^2 +u_{xx}) -xy(u_{t}+uu_{x}+u_{xxx}) -\tfrac{1}{6} \sigma y^3(u_t u_x + u_{tx} + u_{xx} + u_{txxx}) = w_y,
\\&
\sigma x (u-yu_{y}) +\tfrac{1}{2}y^2 u_{t} -\tfrac{1}{6}y^3 u_{ty} =-w_x .
\end{aligned}
\label{KP.potsys4}
\end{align}
Each system is equivalent (modulo gauge freedom \eqref{2D.gaugefreedom})
to the KP equation.

\subsection{Universal modified KP equation}

A general modified version of the KP equation is given by
\begin{equation}\label{universalmKP}
(v_t -v^2 v_x +  \alpha v_x\partial_x^{-1}v_y + \beta vv_y+ v_{xxx})_x +\sigma v_{yy}=0
\end{equation}
up to scaling of the coefficients, where $\sigma^2= 1$.
This equation has recently been derived \cite{RatBri} as the governing equation of 
phase modulations of travelling waves in general nonlinear systems 
in 2+1 dimensions.
The case $\alpha^2=2$, $\beta=0$, $\sigma =1$ is called the mKP equation,
which is known to be integrable \cite{KonDub1984}.

In the general case,
equation \eqref{universalmKP} can be written as a PDE
in terms of a potential $u$ via $v=u_x$,
which yields
$(u_{tx} - u_x^2 u_{xx} +  \alpha u_y u_{xx} + \beta u_x u_{xy} + u_{xxxx})_x +\sigma u_{xyy}=0$.
By integrating this PDE with respect to $x$, and dropping the integration constant (a function of $t,y$), we obtain the equation 
\begin{equation}\label{mKP}
u_{tx} - u_x^2 u_{xx} +  \alpha u_y u_{xx} + \beta u_x u_{xy} + u_{xxxx} +\sigma u_{yy}=0, 
\quad
\sigma^2= 1 
\end{equation}
which we will refer to as the universal mKP (\emph{umKP}) equation.
Note that it matches the form of the general spatial divergence PDE \eqref{pde.divform}
for $n=2$ with 
\begin{equation}
F^x=\tfrac{1}{3} u_x^3 -\alpha u_y u_x -u_{xxx},
\quad
F^y =  \tfrac{1}{2} (\alpha - \beta) u_x^2 -\sigma u_y . 
\end{equation}

The umKP equation \eqref{mKP} admits the multiplier $f(t)$
given by an arbitrary function of $t$. 
Similarly to the KP equation,
another multiplier is $yf(t)$ when $F^y$ is a $y$-derivative,
namely in the case $\alpha=\beta$.

From a recent classification result in \Ref{AncGanRec2020} 
for conservation laws of a $p$-power generalization of equation \eqref{mKP}, 
we obtain four additional multipliers:
\begin{equation}
(\alpha -\beta) u_{x} f(t) + y f'(t) ;
\label{mKP.Q1}
\end{equation}
\begin{equation}
2u_y f(t) -(\tfrac{3}{4} \alpha x +\tfrac{1}{2} y u_x) f'(t) +\tfrac{3}{8} \alpha y^2 f''(t)
\label{mKP.Q2}
\end{equation}
in the case $\alpha^2=\tfrac{2}{3}$, $\beta=2\alpha$, $\sigma=1$; 
and 
\begin{gather}
(x - \alpha yu_{x}) f(t)-\tfrac{1}{2} y^2 f'(t) ,
\label{mKP.Q3}\\
(\tfrac{4}{3} \alpha u_{x} u_{y} +  \tfrac{2}{3} u_{t} -\tfrac{8}{9} u_{x}^3 + \tfrac{8}{3} u_{xxx}) )f(t)
-\tfrac{2}{3}\alpha  x u_{x} f'(t)
+\tfrac{1}{3}( y^2 -\alpha x y ) f''(t)
+\tfrac{1}{18} y^3 f'''(t) ,
\label{mKP.Q4}
\end{gather}
in the integrable case $\alpha^2=2$, $\beta=0$, $\sigma=1$. 

Each multiplier yields a conservation law of the umKP equation,
involving $f(t)$ and its derivatives. 
We will omit the lengthy expressions for the fluxes
and present only the conserved densities:
\begin{gather}
\tfrac{1}{2} (\alpha-\beta) u_{x}^2 f(t) ;
\label{mKP.dens1}
\\
u_{x} u_{y} f(t) +\tfrac{1}{4} (3\alpha u -y u_{x}^2) f'(t) ; 
\label{mKP.dens2}
\\
(u +\tfrac{1}{2} \alpha y u_{x}^2)f(t)  ; 
\label{mKP.dens3}
\\
u_{xx}^2 +\tfrac{1}{6} (u_{x}^2 -\alpha  u_{y})^2 ) f(t)
+\tfrac{1}{3} x u_{x}^2 f'(t) 
-\tfrac{1}{6}( 2\alpha y u + y^2 u_{x}^2 ) f''(t) . 
\label{mKP.dens4}
\end{gather}
If we take $f(t)=1$,
then the first density \eqref{mKP.dens1} yields the $L^2$ norm of $u_x$: 
$\| u_x \|_{L^2(\Omega)}^2 = \int_\Omega u_x^2\,dx\,dy$,
on any spatial domain $\Omega\subseteq\Rnum^2$. 
The same conserved density is admitted by the KP equation
and physically describes an $x$-momentum quantity
where $u_x=v$ is the wave amplitude.
Likewise for $f(t)=1$,
the second density \eqref{mKP.dens2} yields a $y$-momentum quantity
while the third density \eqref{mKP.dens3} yields a conserved quantity
related to the $y$-moment of the $x$-momentum.
The fourth density \eqref{mKP.dens3} for $f(t)=1$ yields the conserved energy
$E[u] = \int_\Omega ( u_{xx}^2 +\tfrac{1}{6} (u_{x}^2 -\alpha  u_{y})^2 )\,dx\,dy$,
which was first found in \Ref{KenMar} 
in a study of the Cauchy problem for the mKP equation. 

The divergence relation \eqref{dens.div.id} can be applied to each of these four conservation laws,
yielding the following remarkable identities
where $G=u_{tx} - u_x^2 u_{xx} +  \alpha u_y u_{xx} + u_{xxxx} +\sigma u_{yy}$:
\begin{align}
& u_{x}^2 = 
\tfrac{2}{\alpha-\beta}\big( D_x X +D_y Y + y G \big),
\label{mKP.id1}\\
&
X=-y( u_{t} -\tfrac{1}{3} u_{x}^3 +\alpha u_{y}u_{x} +u_{xxx} ) ,
\quad
Y = \sigma u -y( \sigma u_{y} -\tfrac{1}{2}(\alpha-\beta) u_{x}^2 ) ;
\label{mKP.id1.flux}
\end{align}
and
\begin{align}
& u_{x}u_{y} =   D_x X + D_y Y
- (\tfrac{1}{2} u_{x} y +\tfrac{3}{4} \alpha x) G
-\tfrac{1}{4\alpha} y^2 G_t , 
\label{mKP.id2}\\
&\begin{aligned}
X= & 
{-\tfrac{3}{4}} \alpha u_{xx}
+x (
\tfrac{1}{2} u_{x} u_{y}
-\alpha ( \tfrac{1}{4} u_{x}^3 -\tfrac{3}{4}(u_{t}+u_{xxx}) )
) 
\\&\quad
-y (
\tfrac{1}{8} u_{x}^4
+\tfrac{1}{4} u_{y}^2
+\tfrac{1}{4} u_{xx}^2
-\tfrac{1}{2} u_{x} u_{xxx}
-\tfrac{1}{4} \alpha u_{x}^2 u_{y}
) 
\\&\quad
+y^2 (
\tfrac{1}{4}(u_{y} u_{tx}+u_{x} u_{ty})
+\alpha \tfrac{3}{8}(u_{tt} -u_{x}^2 u_{tx} +u_{txxx})
) 
\big)
, 
\\
Y = &
x (
\tfrac{3}{4} \alpha u_{y}+\tfrac{1}{4} u_{x}^2
) 
+ y (
-\alpha(\tfrac{3}{4} u_{t} - \tfrac{1}{4} u_{x}^3) 
+\tfrac{1}{2} u_{x} u_{y}
)
+y^2 (
\tfrac{1}{4} u_{x} u_{tx}+\tfrac{3}{8} \alpha u_{ty}
) 
, 
\end{aligned}
\label{mKP.id2.flux}
\end{align}
in the case $\alpha^2=\tfrac{2}{3}$, $\beta=2\alpha$, $\sigma=1$;
and also 
\begin{align}
& y u_{x}^2 + \alpha u =  D_x X + D_y Y -\tfrac{1}{2}\alpha y^2 G , 
\label{mKP.id3}\\
& X = 
-y^2 (u_{y} u_{x} + \tfrac{1}{2}\alpha (u_{t} +\tfrac{1}{3} u_{x}^3 -u_{xxx}) 
) 
,
\quad
Y = 
\alpha y u - \tfrac{1}{2} y^2(\alpha u_{y} -u_{x}^2) 
, 
\label{mKP.id3.flux}
\end{align}
\begin{flalign}
& \begin{aligned}
& u_{xx}^2 +\tfrac{1}{6} (\alpha  u_{y} -u_{x}^2)^2
\\& =
D_x X + D_y Y
+\tfrac{1}{3} \alpha (2 x u_{x} + y^2 u_{tx}) G
-\tfrac{1}{3}(2\alpha x y - y^2 u_{x}) G_{t}
-\tfrac{1}{9}\alpha y^3 G_{tt} ,
\end{aligned}
\label{mKP.id4}\\
& \begin{aligned}
X = &
\tfrac{2}{3} u_{x} u_{xx}
-\tfrac{1}{3}\alpha y u_{txx}
+ x (
\tfrac{1}{6} u_{x}^4
+\tfrac{1}{3}  u_{y}^2
-\tfrac{1}{3}\alpha u_{x}^2 u_{y}
+\tfrac{1}{3}  u_{xx}^2
-\tfrac{2}{3} u_{x} u_{xxx}
)
\\&\quad
+ x y (
\tfrac{2}{3}( u_{x} u_{ty}+ u_{y}u_{tx} )
+\tfrac{1}{3}(u_{tt} - u_{x}^2 u_{tx}  u_{txxx})
)
\\&\quad
+y^2 (
\tfrac{1}{3}( 
u_{y}u_{ty}
+ u_{x}^3 u_{tx}
+  u_{xx} u_{txx}
- u_{tx} u_{xxx}
- u_{x} u_{txxx}
)
-\tfrac{1}{6}\alpha ( 2u_{x} u_{y} u_{tx} + u_{x}^2u_{ty} )
) 
\\&\quad
+ y^3( \tfrac{1}{9}( u_{y} u_{ttx} + u_{x} u_{tty} -2 \alpha  u_{tx} u_{ty} )
-\tfrac{1}{18}\alpha ( 2 u_{x} u_{tx}^2 + u_{x}^2 u_{ttx} -u_{ttt} -u_{ttxxx} )
)_ , 
\\
Y = & 
-x (
\alpha (\tfrac{1}{3} u_{t} -\tfrac{1}{9} u_{x}^3)
+\tfrac{2}{3} u_{x} u_{y} 
)
+\tfrac{1}{3} x y (
\alpha u_{ty}
-2 u_{x} u_{tx} 
)
\\&\quad
-\tfrac{1}{6} y^2 (
\alpha (u_{tt} -u_{x}^2 u_{tx})
+2( u_{y}u_{tx} + u_{x} u_{ty} )
) 
-\tfrac{1}{18} y^3 (
2u_{tx}^2
+2u_{x} u_{ttx}
-\alpha u_{tty}
) 
, 
\end{aligned}
\label{mKP.id4.flux}
\end{flalign}
in the integrable case $\alpha^2=2$, $\beta=0$, $\sigma=1$.

For umKP solutions $u(t,x,y)$,
the first identity \eqref{mKP.id1} implies that the $L^2$ norm of $u_x$
can be expressed as a line integral around the boundary of the spatial domain,
while the fourth identity \eqref{mKP.id4} implies that the energy norm of $u$
can be expressed in a similar form. 
Consequently,
if we consider the Cauchy problem for the umKP equation
\begin{equation}\label{mKP.ut.eqn}
u_{t} = \tfrac{1}{3}u_x^3 - u_{xxx} -\alpha u_y u_{x} -\beta u_x u_{xy} +\partial_x^{-1}(\tfrac{1}{2}(\alpha-\beta) u_x^2 -\sigma u_{y})_y
\end{equation}
on $\Rnum^2$,
then it cannot be well-posed in the following two situations:
\begin{equation}
\| u_x \|_{L^2(\Rnum^2)} <\infty
\quad\text{ and }\quad
\lim_{r\to\infty}\oint_{S^1(r)} ( Y_{L^2} \sin\theta + X_{L^2} \cos\theta )|_\Esp r\,d\theta =0 
\end{equation}
where $(X_{L^2},Y_{L^2})$ is given by expressions \eqref{mKP.id1.flux}; 
or
\begin{equation}
\alpha^2=2,
\
\beta=0,
\
\sigma=1,
\
E[u] <\infty
\ \text{ and }
\lim_{r\to\infty}\oint_{S^1(r)} ( Y_{E} \sin\theta + X_{E} \cos\theta )|_\Esp r\,d\theta =0 
\end{equation}
where $(X_{E},Y_{E})$ is given by expressions \eqref{mKP.id4.flux}.
In both line integrals,
$u_t$ is eliminated through equation \eqref{mKP.ut.eqn},
so that the integrand depends on only $u$ and its $x,y$-derivatives.

The conservation laws corresponding to the four conserved densities \eqref{mKP.dens1}--\eqref{mKP.dens4}
are each locally equivalent to a spatial flux conservation law by Theorem~\ref{thm:main},
which yields four conserved non-trivial topological charges for the umKP equation.
In particular, through the divergence relation \eqref{mKP.id3},
the energy quantity can be expressed as a topological charge:
\begin{equation}
E[u] = \oint_{\partial\Omega} (X_{E}\, dy - Y_{E}\, dx )|_\Esp 
\end{equation}
where $\partial\Omega$ is the boundary of the spatial domain $\Omega$. 
This remarkable integral relation provides
an explanation to a question raised in \Ref{KenMar}
about why the conserved energy could not be used to obtain 
a global existence result for solutions to the Cauchy problem for the integrable mKP equation.

Finally, each topological charge 
gives rise to an associated spatial potential system for the umKP equation, 
in the same fashion as for the KP equation.

\subsection{Shear-wave equations}

The propagation of shear waves in incompressible nonlinear solids
can be modelled by the equation
\begin{equation}\label{ZZKeqn}
(u_{t} + (\alpha u +\beta u^2)u_{x})_x + \Delta_\perp u=0
\end{equation}
up to a scaling of the coefficients.
Here $\Delta_\perp = \partial_y^2 +\partial_z^2$ is the transverse Laplacian. 
For $\alpha=0$,
equation \eqref{ZZKeqn} describes nonlinear, linearly-polarized shear waves \cite{Zab},
while for $\beta=0$,
it describes weakly nonlinear, weakly diffracting shear waves \cite{DesLaiOriSac,ZabKho}. 
In the latter case, the equation is known as the Khokhlov--Zabolotskaya equation,
and its local conservation laws have been found in \Ref{Sha}. 

This equation has the form of a dispersionless Gardner equation extended to three spatial dimensions.
It matches the form of the general spatial divergence PDE \eqref{pde.divform}
for $n=3$ with 
\begin{equation}
F^x=-(\alpha u +\beta u^2)u_{x} = -(\tfrac{1}{2}\alpha u^2 +\tfrac{1}{3}\beta u^3)_x , 
\quad
F^y = -u_y ,
\quad
F^z = -u_z ,
\end{equation}
where all components $(F^x,F^y,F^z)$ are themselves a spatial divergence. 

Equation \eqref{ZZKeqn} admits the multipliers $f(t)$, $yf(t)$, $zf(t)$,
in accordance with the discussion in section~\ref{sec:intro}.
In fact, these multipliers are special cases of a more general multiplier
\begin{equation}\label{ZZK.Q}
f_t(t,y,z) -x \Delta_\perp f(t,y,z),
\quad
\Delta_\perp^2 f(t,y,z) =0 , 
\end{equation}
which is obtained by a direct computation of all multipliers 
that have the form $Q(t,x,y,z,u,u_t,u_x,u_y,u_z)$
with arbitrary dependence on $t$.
Note that if $f(t,z,y)$ is assumed to be analytic in $t$
then without loss of generality 
we can put $f(t,y,z)=\tilde f(t) \phi(y,z)$
where $\phi$ satisfies the biharmonic equation $\Delta^2\phi=0$,
and where $\tilde f(t)$ is an arbitrary function of $t$. 

The conservation law arising from this multiplier \eqref{ZZK.Q}
is given by the following conserved current $(\dens,\flux^x,\flux^y,\flux^z)$:
\begin{equation}\label{ZZK.current}
\begin{aligned}
\big( & 
(\Delta_\perp u) f , 
-(x \Delta_\perp u_t) f
+( \tfrac{1}{2}\alpha u^2 +\tfrac{1}{3}\beta u^3 -x(\alpha u +\beta u^2) u_x )\Delta_\perp f
+(u_t + (\alpha u +\beta u^2)u_x)f_t 
,
\\&
x( u_{txy} f -u_{tx} f_y -u_y \Delta_\perp f + u \Delta_\perp f_y ), 
x( u_{txz} f -u_{tx} f_z -u_z \Delta_\perp f + u \Delta_\perp f_z )
\big) . 
\end{aligned}
\end{equation}
For $f=\phi$,
this conserved current describes a balance equation
for the rate of change of  the weighted mass 
$\int_{{\rm int}(S)} \phi \Delta_\perp u \,dxdydz$, 
holding for solutions $u(t,x,y,z)$ of equation \eqref{ZZKeqn},
where ${\rm int}(S)$ is the interior of any volume bounded by a closed surface $S$ 
in $(x,y,z)$-space. 

The divergence relation \eqref{dens.div.id} applied to the conserved current \eqref{ZZK.current}
yields the identity
\begin{equation}\label{ZZK.id}
\phi \Delta_\perp u  = D_x\big( (u_t + (\alpha u +\beta u^2)u_x)\phi \big) +\phi G ,
\end{equation}
where $G= (u_{t} + (\alpha u +\beta u^2)u_{x})_x + \Delta_\perp u$.
Consequently, the conserved current is locally equivalent to
a spatial flux conservation law.
The resulting topological charge is given by 
\begin{equation}\label{ZZK.charge}
\oint_{S} (u_t+ (\alpha u +\beta u^2)u_x)\phi \,dydz =0 .
\end{equation}
It can be expressed as a balance equation 
\begin{equation}
\frac{d}{dt}\oint_{S} u \phi \,dydz = -\oint_{S} (\alpha u +\beta u^2)u_x \phi \,dydz
\end{equation}
for the weighted flux of the mass transport vector $(u,0,0)$ 
through closed surfaces $S$,
with weighting factor $\phi$. 
In particular, if $S$ is taken to comprise the planar surfaces bounding a rectangular volume with infinite extent in the $y$ and $z$ directions, 
then $\oint_{S} u \,dydz$ is the net flux through the two transverse surfaces, 
say $x=a$ and $x=b$, for solutions $u(t,x,y)$ with sufficient transverse decay. 

The spatial potential system arising from the topological charge \eqref{ZZK.current}
consists of the following system of equations for $(u,w^x,w^y,w^z)$:
\begin{equation}
\begin{aligned}
& ( \tfrac{1}{2}\alpha u^2 +\tfrac{1}{3}\beta u^3 -x(\alpha u +\beta u^2) u_x )\Delta_\perp\phi
  -( u_{tt} + (\alpha +2\beta u)u_{t}u_{x} + (\alpha u +\beta u^2) u_{tx} +x\Delta_\perp u_t )\phi
\\&  
=(w^z)_y - (w^y)_z,
\\&
x( u_{txy} \phi -u_y \Delta_\perp\phi -u_{tx} \phi_y + u \Delta_\perp\phi_y ) 
= (w^x)_z - (w^z)_x,
\\&
x( u_{txz} \phi -u_{tx} \phi_z -u_z \Delta_\perp\phi + u \Delta_\perp \phi_z ) = (w^y)_x - (w^x)_y,
\end{aligned}
\end{equation}
where $\phi$ is an arbitrary biharmonic function. 
This system is equivalent (modulo gauge freedom \eqref{3D.gaugefreedom})
to equation \eqref{ZZKeqn}.

Finally,
when the Cauchy problem is considered for equation \eqref{ZZKeqn} in the form
$u_{t} + (\alpha u +\beta u^2)u_{x} + \partial_x^{-1} \Delta_\perp u=0$
on $\Rnum^3$,
the identity \eqref{ZZK.id} implies that
initial data $u|_{t=0}=u_0(x,y,z)$ must satisfy the integral constraint 
$\int_{\Rnum^3} \phi \Delta_\perp u_0 \,dxdydz =0$
for existence of solutions $u(t,x,y,z)$ with sufficiently strong spatial decay. 
Similarly, 
if Dirichlet boundary conditions on $u$ are imposed at $x=a$ and $x=b$, 
then the identity \eqref{ZZK.id} imposes an integral constraint 
$\int_{V} \phi \Delta_\perp u\,dx\,dy\,dz=0$ 
on solutions of the corresponding boundary-value problem 
posed in infinite rectangular volume $V$ enclosed by the surfaces $x=a$ and $x=b$.


\subsection{Novikov--Veselov equation}

The Novikov--Veselov (NV) equation \cite{VesNov,NovVes}
\begin{equation}\label{NV.isoflow}
v_t +\alpha (v\partial_y^{-1}v_x)_x +\beta (v\partial_x^{-1}v_y)_y + v_{xxx} + v_{yyy} =0
\end{equation}
generates isospectral flows for the two-dimensional Schr\"odinger operator
at zero energy. 
It generalizes the KdV equation in the sense that 
any NV solution $v(t,x,y)$ produces a KdV solution given by $v(t,x,x)$
up to a scaling of $t,x,v$.

Equation \eqref{NV.isoflow} can be written as a PDE
by introducing a potential $u$
given by $v=u_{xy}$, which satisfies
\begin{equation}\label{NVeqn}
u_{txy} +\alpha (u_{xy}u_{xx})_x +\beta (u_{xy} u_{yy})_y + u_{xxxxy} + u_{xyyyy} =0 . 
\end{equation}
Special cases of this PDE are $\alpha=0$ and $\beta=0$.
In these cases, the PDE can be expressed in a lower-order form
through $\tilde u=u_y$ or $\tilde u=u_x$, which respectively yield
\begin{align}
\tilde u_{tx} + \beta (\tilde u_{x} \tilde u_{y})_y + \tilde u_{xxxx} + \tilde u_{xyyy} =0,
\quad
\tilde u_{ty} + \alpha (\tilde u_{x} \tilde u_{y})_x + \tilde u_{xxxy} + \tilde u_{yyyy} =0 . 
\end{align}
These two PDEs are related by $x\leftrightarrow y$ and $\alpha\leftrightarrow \beta$;
they are sometimes referred to as the asymmetric NV equation. 

The NV equation \eqref{NVeqn} has a higher-order spatial divergence form
$u_{txy} =F_{xy}+ (F^x)_x + (F^y)_y$ given by 
$F= -(u_{xxx} + u_{yyy})$, 
$F^x=  - \alpha u_{xy}u_{xx}$, 
$F^y =  -\beta u_{xy} u_{yy}$. 
This equation (with $\alpha,\beta\neq0$) admits the multipliers
\begin{equation}\label{NV.Qs}
f(t),
\quad
2 u_{xx} f(t)- (1/\alpha) x f'(t) ,
\quad
2 u_{yy} f(t)- (1/\beta) y f'(t)  , 
\end{equation}  
which come from a direct computation of all multipliers
that have the form $Q(t,x,y,u,u_t,u_x,u_y,u_{tx},u_{ty},u_{xy},u_{xx},u_{yy})$. 

The conservation law corresponding to the first multiplier
is given by the following conserved current $(\dens,\flux^x,\flux^y)$:
\begin{equation}\label{NV.current1}
f(t)\big( 0, 
u_{xxxy}+\alpha u_{xy} u_{xx} +\tfrac{1}{2} u_{ty}, 
u_{xyyy}+\beta u_{xy} u_{yy}+\tfrac{1}{2} u_{tx}
\big) . 
\end{equation}
Likewise,
the conservation laws corresponding to the second and third multipliers
are given by a pair of conserved currents
\begin{equation}\label{NV.current2}
\begin{aligned}
& f(t)\big( u_{xx} u_{xy} , 
-u_{tx} u_{xy} \alpha u_{xx}^2 u_{xy} -\beta u_{xy}^2 u_{yy} +u_{xyy}^2 +2 u_{xx} u_{xxxy} ,
\\&\quad
\tfrac{1}{3} (\alpha u_{xx}^3 + \beta u_{xy}^3)
-u_{xxx}^2 -2 u_{xxy} u_{xyy} 
+(u_{tx}+2 \beta u_{xy} u_{yy}+2 u_{xyyy}) u_{xx}
\big)
\\&
+f'(t) \big( 0,
(1/\alpha) u_{xxx} 
-x (u_{xx} u_{xy} +(1/\alpha) u_{xxxy}) ,
-(1/\alpha) x (u_{tx} +\beta u_{xy} u_{yy} +u_{xyyy}) 
\big) 
\end{aligned}
\end{equation}
and 
\begin{equation}\label{NV.current3}
\begin{aligned}
& f(t)\big( u_{yy} u_{xy} ,
\tfrac{1}{3} (\alpha u_{xy}^3 + \beta u_{yy}^3)
-u_{yyy}^2 -2 u_{xxy} u_{xyy}
+(u_{ty} +2\alpha u_{xx} u_{xy} +2  u_{xxxy})u_{yy} ,
\\&\quad
-\alpha u_{xx} u_{xy}^2
+\beta u_{xy} u_{yy}^2
-u_{ty} u_{xy}
+u_{xxy}^2
+2 u_{yy} u_{xyyy}
\big)
\\&
+ f'(t) \big( 0,
-(1/\beta) y (\alpha u_{xy} u_{xx} +u_{ty}+u_{xxxy}) , 
(1/\beta) u_{yyy} -y (u_{xy} u_{yy} +(1/\beta) u_{xyyy} ) 
\big) 
\end{aligned}
\end{equation}
which are related by reflection under $x\leftrightarrow y$. 

The first conserved current \eqref{NV.current1} 
represents a spatial-flux conservation law.
Its global form \eqref{2D.charge.conslaw} is a conserved topological charge
$\oint_{C} Y\,dx - X\,dy=0$ 
given by 
\begin{equation}
X = \tfrac{1}{2} u_{ty} +\alpha u_{xx}u_{xy} + u_{xxxy},
\quad
Y = \tfrac{1}{2} u_{tx} +\beta u_{xy}u_{yy} + u_{xyyy}
\label{NV.charge1}
\end{equation}
for NV solutions $u(t,x,y)$,
where $C$ is any closed curve in the $(x,y)$-plane.
This conservation law can also be expressed as a balance equation
\begin{equation}
\frac{d}{dt} \oint_{C} u_{x}\,dx - u_{y}\,dy = 
2\oint_{C} (\alpha u_{xx}u_{xy} + u_{xxxy})\,dy -(\beta u_{xy}u_{yy} + u_{xyyy})\,dx 
\end{equation}
for the circulation of the vector $(-u_x,u_y)$ whose curl is $2u_{xy}=2v$. 

If we take $f(t)=1$,
then the other two conserved currents \eqref{NV.current2} and \eqref{NV.current3} 
represent conservation laws describing balance equations
for rate of change of momenta
$\int_{{\rm int}(C)} u_{xx} u_{xy}\,dxdy$
and $\int_{{\rm int}(C)} u_{xy} u_{yy}\,dxdy$, 
holding for NV solutions $u(t,x,y)$,
where ${\rm int}(C)$ is the interior of any closed curve $C$ 
in the $(x,y)$-plane.
The divergence relation \eqref{dens.div.id} applied to these conservation laws
yields the respective identities 
\begin{align}
& u_{xx} u_{xy} 
= (1/\alpha)\big( D_x( x (\alpha u_{xx} u_{xy} + u_{xxxy}) )
+ D_y( x (u_{tx} +\beta u_{xy} u_{yy} + u_{xyyy}) -u_{xxx} )
- x G f(t) \big),
\label{ZZK.id1}
\\
& u_{xy} u_{yy} 
= (1/\beta)\big( D_x( y (u_{ty} +\alpha u_{xx} u_{xy} + u_{xxxy}) -u_{yyy} ) 
+ D_y( y (\beta u_{xy} u_{yy} + u_{xyyy}) )
-y G f(t) \big),
\label{ZZK.id2}
\end{align}
where $G = u_{txy} +\alpha (u_{xy}u_{xx})_x +\beta (u_{xy} u_{yy})_y + u_{xxxxy} + u_{xyyyy}$. 
When the Cauchy problem for the NV equation \eqref{NVeqn} is considered on $\Rnum^2$,
these identities \eqref{ZZK.id1} and \eqref{ZZK.id2} imply that
initial data $v|_{t=0}=v_0(x,y)=u_0{}_{xy}$
must satisfy the integral constraints 
$\int_{\Rnum^2} u_0{}_{xx}u_0{}_{xy}\,dxdy = \int_{\Rnum^2} u_0{}_{yy}u_0{}_{xy}\,dxdy = 0$. 

In addition, the identities \eqref{KP.id1} and \eqref{KP.id2}
show that each of the momenta conservation laws is locally equivalent to
a spatial flux conservation law.
The corresponding topological charges
$\oint_{C} Y\,dx - X\,dy=0$ for any closed curve $C$ 
are given by the fluxes
\begin{equation}\label{NV.charge2}
\begin{aligned}
X = & 
x (u_{xy} u_{txx}+u_{xx} u_{txy}+ (1/\alpha) u_{txxxy}) 
-u_{xy} u_{tx} +\alpha u_{xx}^2 u_{xy} -\beta u_{xy}^2 u_{yy} 
+u_{xyy}^2 +2 u_{xx} u_{xxxy} , 
\\
Y = &
(1/\alpha) x ( u_{ttx} +\beta (u_{yy} u_{txy} + u_{xy} u_{tyy}) +u_{txyyy} ) 
+\tfrac{1}{3}(\alpha u_{xx}^3 +\beta u_{xy}^3)
-u_{xxx}^2
\\&\quad
+(u_{tx} + 2 \beta u_{xy} u_{yy}  +2 u_{xyyy}) u_{xx}
-2  u_{xxy} u_{xyy} -(1/\alpha) u_{txxx} , 
\end{aligned}
\end{equation}
and
\begin{equation}\label{NV.charge3}
\begin{aligned}
X = &
(1/\beta) x ( u_{tty} +\alpha (u_{xx} u_{txy} + u_{xy} u_{txx}) +u_{txxxy} ) 
+\tfrac{1}{3}(\alpha u_{xx}^3 +\beta u_{xy}^3)
-u_{yyy}^2
\\&\quad
+(u_{ty} +2 \alpha u_{xy} u_{xx} +2 u_{xxxy}) u_{yy}
-2  u_{xxy} u_{xyy} -(1/\beta) u_{tyyy} ,
\\
Y = & 
x (u_{xy} u_{tyy}+u_{yy} u_{txy}+ (1/\beta) u_{txyyy}) 
-u_{xy} u_{ty} -\alpha u_{xy}^2 u_{xx} +\beta u_{yy}^2 u_{xy} 
+u_{xxy}^2 +2 u_{yy} u_{xyyy} . 
\end{aligned}
\end{equation}

Finally, 
the spatial potential systems arising from the three topological charges \eqref{NV.charge1}, \eqref{NV.charge2} and \eqref{NV.charge3}
consist of pairs of equations for $(u,w)$:
\begin{equation}
X= w_y, 
\quad
Y = -w_x . 
\end{equation}
Each system is equivalent (modulo gauge freedom \eqref{2D.gaugefreedom})
to the NV equation.

\subsection{Vorticity equation}

In two spatial dimensions,
the Navier--Stokes equations for incompressible fluid flow
\begin{equation}
\vec{v}_t + \vec{v}\cdot\nabla\vec{v}
= -(1/\rho)\nabla P + \mu \Delta\vec{v},
\quad
\nabla\cdot\vec{v}=0
\end{equation}
have a well-known formulation in terms of a scalar potential $u$,
where $\mu$ is the viscosity,
and where $\vec{v} = (-u_y,u_x)$ is the fluid velocity.
The vorticity of the fluid is a scalar given by \cite{MajBer}
$\omega = (v^y)_x-(v^x)_y=\Delta u$,
with $\Delta = \partial_x^2 + \partial_y^2$
being the two-dimensional Laplacian.
An evolution equation for $\omega$ is obtained by taking the curl of the velocity equation,
yielding
\begin{equation}\label{vorteqn}
\Delta u_t +u_x \Delta u_y -u_y \Delta u_x -\mu \Delta^2 u =0 . 
\end{equation}
This equation has a higher-order spatial divergence form
$\Delta u_t = \nabla\cdot\vec{F}$
given by 
\begin{equation}
\vec{F} = (u_y\Delta u +\mu \Delta u_x, -u_x\Delta u +\mu \Delta u_y) \\
= -(\Delta u)\vec v + \mu \vec\nabla \Delta u . 
\end{equation}

As expected from the discussion in section~\ref{sec:intro},
equation \eqref{vorteqn} admits the multipliers
\begin{equation}
f(t) , 
\end{equation}
and in the inviscid case $\mu=0$,
\begin{equation}
xf(t),
\quad
yf(t) . 
\end{equation}
There are no additional multipliers 
that have the form $Q(t,x,y,u,u_t,u_x,u_y,u_{tx},u_{ty},u_{xy},$ $u_{xx},u_{yy})$
and that involve an arbitrary function $f(t)$,
as shown by a direct computation. 

The corresponding conservation laws are given by
the following conserved currents $(\dens,\flux^x,\flux^y)$:
\begin{equation}
f(t)\big( 0,  u_{tx} -u_y\Delta u -\mu \Delta u_x, u_{ty} +u_x\Delta u -\mu \Delta u_y \big) ;
\end{equation}
and in the case $\mu=0$, 
\begin{align}
& f(t)\big( 0, 
-u_t +u_x u_y + x(u_{tx} -u_y u_{xx} + u_x u_{xy}) ,
-u_x^2 + x (u_{ty} -u_y u_{xy} + u_x u_{yy}) 
\big) ,
\\
& f(t)\big( 0, 
u_y^2 +y (u_{tx} -u_y u_{xx} +u_x u_{xy}) , 
-u_t -u_x u_y + y (u_{ty} -u_y u_{xy} +u_x u_{yy}) 
\big) . 
\end{align}
The resulting topological charges are given by the line integrals
\begin{align}
&  \oint_{C} (u_{ty} -F^y)\,dx -(u_{tx} -F^x)\,dy =0 , 
\label{vort.charge1}\\
& \oint_{C} ( x(u_{ty} -u_y u_{xy} + u_x u_{yy}) -u_x^2 )\,dx -( x(u_{tx} -u_y u_{xx} + u_x u_{xy}) -u_t +u_x u_y  )\,dy =0 , 
\label{vort.charge2}\\
& 
\oint_{C} ( y(u_{ty} -u_y u_{xy} +u_x u_{yy}) -u_t - u_x u_y )\,dx -( y(u_{tx} -u_y u_{xx} +u_x u_{xy}) +u_y^2 )\,dy =0 , 
\label{vort.charge3}
\end{align}
where $C$ is any fixed closed curve in the $(x,y)$-plane.
They can be expressed equivalently in the form of circulation balance equations
\begin{align}
\frac{d}{dt}\oint_{C} u_{y}\,dx -u_{x}\,dy 
& = \oint_{C} F^y\,dx -F^x\,dy ,
\\
\frac{d}{dt}\oint_{C} x u_{y}\,dx +(u -xu_{x})\,dy 
& = \oint_{C} ( x(u_y u_{xy} -u_x u_{yy}) +u_x^2 )\,dx -( x(u_y u_{xx} - u_x u_{xy}) -u_x u_y  )\,dy , 
\\
\frac{d}{dt} \oint_{C} (y u_{y} -u)\,dx - y u_{x}\,dy 
& = \oint_{C} ( y(u_y u_{xy} -u_x u_{yy}) +u_x u_y )\,dx -(y(u_y u_{xx} -u_x u_{xy}) +u_y^2 )\,dy .
\end{align}
The first equation relates the rate of change 
in the circulation of fluid velocity $\oint_{C} \vec v\cdot d\vec s$ around closed curves
to the flux of $\vec F$ through the curve. 
The other two equations are generalizations involving a weighted circulation of fluid velocity. 

The spatial potential systems for $(u,w)$ arising from 
the topological charges \eqref{vort.charge1}--\eqref{vort.charge3}
are given by 
\begin{align}
&
u_{tx} -u_y\Delta u -\mu \Delta u_x =w_y,
\quad
u_{ty} +u_x\Delta u -\mu \Delta u_y =-w_x; 
\\
&
{-}u_t  +u_x u_y + x (u_{tx} - u_y u_{xx} + u_x u_{xy}) 
= w_y,
\quad
u_x^2 - x (u_{ty} - u_y u_{xy} + u_x u_{yy}) 
= w_x ,
\\
& 
u_y^2 +y (u_{tx} -u_y u_{xx} +u_x u_{xy}) 
= w_y,
\quad
u_t + u_x u_y - y (u_{ty} -u_y u_{xy} +-u_x u_{yy}) 
=w_x . 
\end{align}

\section{Concluding remarks}\label{sec:conclude}

For dynamical PDEs with a spatial divergence form, 
we have shown how conservation laws that involve an arbitrary function of time
contain interesting useful information which goes beyond what is contained
in ordinary conservation laws.

Conservation laws in one spatial dimension
involving an arbitrary function $f(t)$ 
describe the presence of an $x$-independent source/sink; 
in two and more spatial dimensions,
such conservation laws are locally equivalent to spatial-flux conservation laws
that yield non-trivial topological charges.

We have explored two important systematic applications of
this type of conservation law:
construction of an associated spatial potential system which allows
the possibility for finding nonlocal symmetries and nonlocal conservation laws; 
and derivation of an integral constraint relation which imposes
restrictions on well-posedness of the Cauchy problem
with associated initial/boundary data. 

It is well known that the existence of a non-trivial conservation law of a PDE
has a cohomological meaning in the setting of the variation bi-complex \cite{Olv-book}.
Existence of a topological charge can be understood as stating that
a given dynamical PDE possesses a non-trivial spatial cohomology. 

All of these developments will have a natural counterpart
for conservation laws that involve an arbitrary function of
other independent variables.

\section*{Acknowledgements}
SCA is supported by an NSERC research grant
and thanks the University of C\'adiz for additional support during the period 
when this work was initiated.


\begin{thebibliography}{99}

\bibitem{Anc-review}
S.C. Anco, 
Generalization of Noether's theorem in modern form to non-variational partial differential equations. 
In: Recent progress and Modern Challenges in Applied Mathematics, Modeling and Computational Science, 119--182, 
Fields Institute Communications, Volume 79 (2017). 

\bibitem{AncBlu1997b}
S.C. Anco, G. Bluman, 
Nonlocal symmetries and conservation laws of Maxwell's equations, 
J. Math. Phys. 38 (1997), 3508--3532. 

\bibitem{AncChe2018}
S.C. Anco, A. Cheviakov, 
On the different types of global and local conservation laws for partial differential equations in three spatial dimensions, 
International J. Nonlin. Mech. 126 (2020), 10359. 

\bibitem{AncGanRec2018}
S.C. Anco, M.L. Gandarias, E. Recio,
Conservation laws, symmetries, and line soliton solutions of generalized KP and Boussinesq equations with $p$-power nonlinearities in two dimensions, 
Theor. Math. Phys. 197(1) (2018), 1393--1411.

\bibitem{AncGanRec2020}
S.C. Anco, M.L. Gandarias, E. Recio,
Line-solitons, line-shocks, and conservation laws of a universal KP-like equation in 2+1 dimensions.
arxiv: 1908.03962 

\bibitem{BihDosPop}
A. Bihlo, E. Dos Santos Cardoso-Bihlo, R.O. Popovych, 
Invariant and conservative parameterization schemes.
In: Parameterization of Atmospheric Convection (Eds. R.S. Plant, Y.I. Jano), Vol. 2, 
World Scientific, Imperial College Press (2015), chapter 25.

\bibitem{BCA-book}
G.W. Bluman, A. Cheviakov, S.C. Anco,
Applications of Symmetry Methods to Partial Differential Equations.
Springer, (2009).

\bibitem{DesLaiOriSac}
M. Destrade, A. Lain, G. Oriely, G. Saccomandi, 
Scalar evolution equations for shear waves in incompressible solids: a simple derivation of the Z, ZK, KZK and KP equations, 
Proc. R. Soc. A 467 (2011), 1823--1834.

\bibitem{KadPet}
B.B. Kadomstev and V.I. Petviashvili,
On the stability of waves in weakly dispersive media, 
Sov. Phys. Dokl. 15 (1970), 539--541.

\bibitem{KenMar}
C.E. Kenig, Y. Martel, 
Global well-posedness in the energy space for a modified KP II equation via the Miura transform, 
Trans. Amer. Math. Soc. 356 (2006), 2447--2488.

\bibitem{KonDub1984}
B.G. Konopelchenko, V.G. Dubrovsky, 
Some new integrable nonlinear evolution equations in 2+1 dimensions, 
Phys. Lett. A 102 (1984), 15--17. 



\bibitem{MajBer}
A.J. Majda, A.L. Bertozzi,
{\em Vorticity and Incompressible Flow},
Cambridge University Press, 2002.

\bibitem{MolSauTzv}
L. Molinet, J.-C. Saut, N. Tzvetkov,
Remarks on the mass constraint for KP type equations,
SIAM J. Math. Anal. 39 (2007), 627--641.

\bibitem{NovVes}
S.P. Novikov, A.P. Veselov,
Two-dimensional Schrodinger operator: inverse scattering transform and evolutional equations,
Physica D 18 (1986), 267--273. 

\bibitem{Olv-book}
P.J. Olver, 
{\em Applications of Lie Groups to Differential Equations},
Springer-Verlag, New York, 1993.

\bibitem{PopBih}
R.O. Popovych, A. Bihlo, 
Inverse problem on conservation laws,
Physica D 401 (2020), 132175. 

\bibitem{RatBri}
D.J. Ratliff, T.J. Bridges,  
Reduction to modified KdV and its KP-like generalization via phase modulation,
Nonlinearity 31 (2018), 3794--3813.

\bibitem{Sha}
N.O. Sharomet, 
Symmetries, invariant solutions and conservation laws of the nonlinear acoustics equation. 
In: Symmetries of partial differential equations, Part I., 
Acta Appl. Math. 15(1-2) (1989), 83--120.

\bibitem{VesNov}
A.P. Veselov, S.P. Novikov,
Finite-gap two-dimensional potential Schrödinger operators:
Explicit formulas and evolution equations (Russian),
Dokl. Akad. Nauk SSSR 279(1) (1984), 20--24. 

\bibitem{WanBihNav}
A. Wan, A. Bihlo, J.C. Nave, 
The multiplier method to construct conservative finite difference schemes for ordinary and partial differential equations, 
SIAM J. Numer. Anal. 54 (2016) 86–119.

\bibitem{Zab}
E.A. Zabolotskaya, Sound beams in a nonlinear isotropic solid,
Sov. Phys. Acoust. 32 (1986), 296--299.

\bibitem{ZabKho}
E.A. Zabolotskaya, R.V. Khokhlov, 
Quasi-plane waves in the nonlinear acoustics of confined beams,
Sov. Phys. Acoust. 15 (1969), 35--40.

\bibitem{ZakKuz}
V.E. Zakharov, E.M. Kuznetsov,
Three dimensional solitons, 
Soviet Phys. JETP 39 (1974), 285--286. 


\end{thebibliography}
\end{document}